\documentclass{article}
\usepackage{amsmath,amssymb,amsthm}
\usepackage[noend]{algpseudocode}
\usepackage{algorithm}
\usepackage{color}
\usepackage{lineno}
\usepackage{graphicx}
\usepackage{ifthen}
\usepackage{hyperref}
\definecolor{gray}{rgb}{0.7,0.7,0.7}

\newcommand{\appendixMode}{0}
\newcommand{\appdx}[3]{\ifthenelse{\equal{#1}{0}}{#2}{#3}}
\newcommand{\arxivonly}[1]{#1}

\newcommand{\nerve}{\mathrm{Nrv}}
\newcommand{\pers}{\mathrm{Pers}}
\newcommand{\cech}{\mathcal{C}}
\newcommand{\rips}{\mathcal{R}}
\newcommand{\pP}{{\widetilde{P}}}
\newcommand{\e}{\varepsilon}

\newcommand{\R}{\mathbb{R}}

\newcommand{\nbr}{\mathrm{nbr}}

\newcommand{\parent}{\mathrm{parent}}
\newcommand{\ball}{\mathrm{ball}}

\newcommand{\dist}{\mathrm{\mathbf{d}}}

\newcommand{\link}{\mathrm{Lk\;}}

\renewcommand{\because}[1]{& \left[\text{#1}\right]}

\newcommand{\ch}{\mathrm{ch}}
\newcommand{\pred}{\mathrm{pred}}
\newcommand{\D}{\mathcal{D}}
\DeclareMathOperator*{\argmax}{arg\,max}

\newtheorem{theorem}{Theorem}
\newtheorem{lemma}[theorem]{Lemma}
\newtheorem{proposition}[theorem]{Proposition}
\newtheorem{corollary}[theorem]{Corollary}

\title{A Geometric Perspective on Sparse Filtrations\thanks{A short version of this paper appeared in the proceedings of the 2015 Canadian Conference on Computational Geometry}}

\author{
  Nicholas J. Cavanna\thanks{University of Connecticut \texttt{nicholas.j.cavanna@uconn.edu}}
  \and
  Mahmoodreza Jahanseir\thanks{University of Connecticut \texttt{reza@engr.uconn.edu}}
  \and
  Donald R. Sheehy\thanks{University of Connecticut \texttt{don.r.sheehy@gmail.com}}
}

\date{}

\begin{document}
  \maketitle

  \begin{abstract}

  We present a geometric perspective on sparse filtrations used in topological data analysis.
  This new perspective leads to much simpler proofs, while also being more general, applying equally to Rips filtrations and \v Cech filtrations for any convex metric.
  We also give an algorithm for finding the simplices in such a filtration and prove that the vertex removal can be implemented as a sequence of elementary edge collapses.
  
  A video illustrating this approach is available~\cite{cavanna15visualizing}\arxivonly{ as well as a short conference version~\cite{cavanna15geometric}}.
\end{abstract}
  \section{Introduction} 
\label{sec:introduction}

  Given a finite data set in a Euclidean space, it is natural to consider the balls around the data points as a way to fill in the space around the data and give an estimate of the missing data.
  The union of balls is often called the \emph{offsets} of the point set.
  Persistent homology was originally invented as a way to study the changes in topology of the offsets of a point set as the radius increases from $0$ to $\infty$.
  The input to persistent homology is usually a filtered simplicial complex, that is, an ordered collection of simplices (vertices, edges, triangles, etc.) such that each simplex appears only after its boundary simplices of one dimension lower.
  The Nerve Theorem and its persistent variant allow one to compute the persistent homology of the offsets by instead looking at a discrete object, a filtered simplicial complex called the nerve (see Fig.~\ref{fig:pipeline}).
  The simplest version of this complex is called the \v Cech complex and it may be viewed as the set of all subsets of the input, ordered by the radius of their smallest enclosing ball.
  Naturally, the \v Cech complex gets very big very fast, even when restricting to subsets of constant size.
  A common alternative is the Rips complex but it suffers similar difficulties.
  Over the last few years, there have been several approaches to building sparser complexes that still give good approximations to the persistent homology~\cite{sheehy13linear,kerber13approximate,dey14computing,buchet15efficient,botnan15approximating}.

   \begin{figure}
    \centering
      \appdx{\appendixMode}{
        \includegraphics[width=0.95\textwidth]{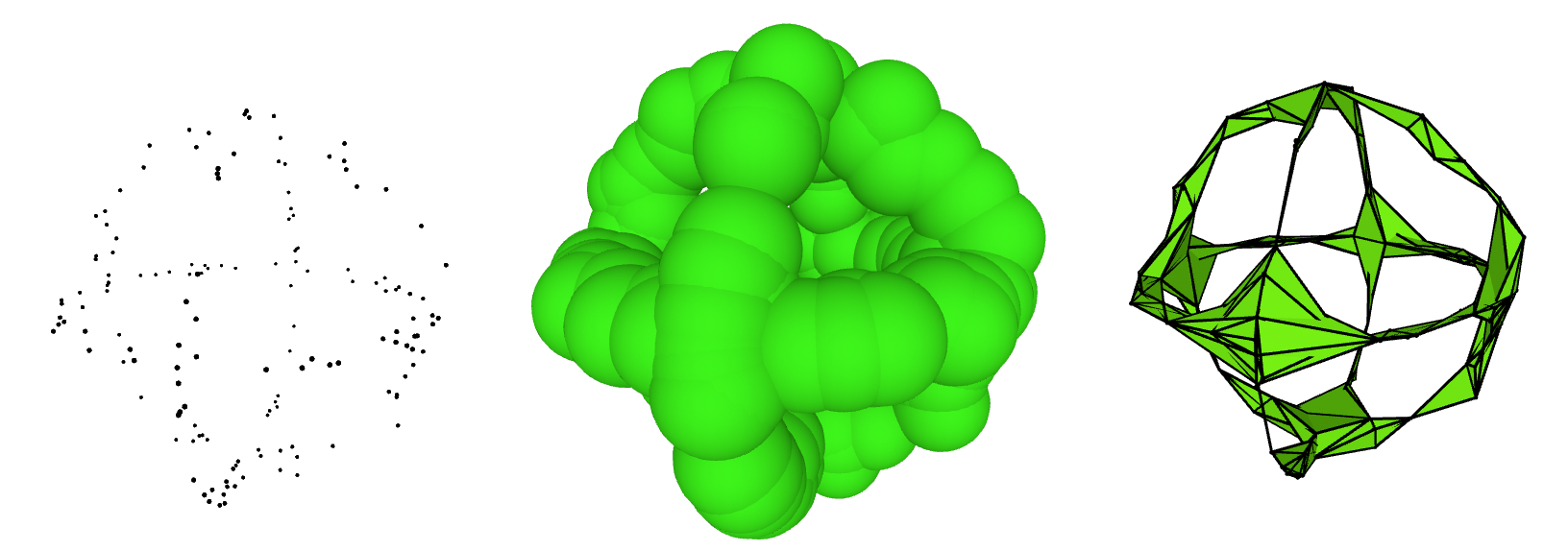}
      }{
        \includegraphics[width=0.45\textwidth]{figures/pipeline}
      }
        
    \caption{A point set sampled on a sphere, its offsets, and its (sparsified) nerve complex.}
    \label{fig:pipeline}
  \end{figure}

  Our main contributions are the following.
  \begin{enumerate}
    \item A much simpler explanation for the construction and proof of correctness of sparse filtrations.
    Our new geometric construction shows that the sparse complex is just a nerve in one dimension higher.
    \item The approach easily generalizes to Rips, \v Cech and related complexes (the offsets for any convex metric).
    This is another advantage of the geometric view as the main result follows from convexity rather than explicit construction of simplicial map homotopy equivalences.
    \item A simple geometric proof that the explicit removal of vertices from the sparse filtration can be done with simple edge contractions.
    This can be done without resorting to the full-fledged zig-zag persistence algorithm~\cite{carlsson09zigzag,carlsson10zigzag,maria15zigzag,milosavljevic11zigzag} or even the full simplicial map persistence algorithm~\cite{dey14computing,boissonat13compressed}.
  \end{enumerate}

  The most striking thing about this paper is perhaps more in what is absent than what is present.
  Despite giving a complete treatment of the construction, correctness, and approximation guarantees of sparse filtrations that applies to both \v Cech and Rips complexes, there is no elaborate construction of simplicial maps or proofs that they induce homotopy equivalences.
  In fact, we prove the results directly on the geometric objects, the covers, rather than the combinatorial objects, the complexes, and the result is much more direct.
  In a way, this reverses a common approach in computational geometry problems in which the geometry is as quickly as possible replaced with combinatorial structure; instead, we delay the transition from the offsets to a discrete representation until the very end of the analysis.


  \paragraph{Related Work.} 
  \label{par:related_work}

    Soon after the introduction of persistent homology by Edelsbrunner et al.~\cite{edelsbrunner02topological}, there was interest in building more elaborate complexes for larger and larger data sets.
    Following the full algebraic characterization of persistent homology by Zomorodian and Carlsson~\cite{zomorodian05computing}, a more general theory of zigzag persistence was developed~\cite{carlsson09zigzag,carlsson10zigzag,maria15zigzag,milosavljevic11zigzag} using a more complicated algorithm.
    Zig-zags gave a way to analyze spaces that did not grow monotonically; they could alternately grow and shrink such as by growing the scale and then removing points~\cite{tausz11applications}.
    A variant of this techniques was first applied for specific scales by Chazal and Oudot in work on manifold reconstruction~\cite{chazal08towards} and was implemented as a full zigzag by Morozov in his Dionysus library~\cite{dionysus}.
    Later, Sheehy gave a zig-zag for Rips filtrations that came with guaranteed approximation to the persistent homology of the unsparsified filtration~\cite{sheehy13linear}.
    Other later works gave various improvements and generalizations of sparse zig-zags~\cite{oudot14zigzag,kerber13approximate,dey14computing,botnan15approximating}.



  \section{Background} 
\label{sec:background}

  \paragraph{Distances and Metrics.} 
    Throughout, we will assume the input is a finite point set $P$ in $\R^d$ endowed with some convex metric $\dist$.
    A closed ball with center $c$ and radius $r$ will be written as $\ball(c,r) = \{x\in \R^d | \dist(x,c)\le r\}$.
    For illustrative purposes, we will often draw balls as Euclidean ($\ell_2$) balls.

    For a non-negative $\alpha \in \R$, the \emph{$\alpha$-offsets} of $P$ are defined as
    \[
      P^\alpha := \bigcup_{p\in P}\ball(p,\alpha).
    \]
    The sequence of offsets as $\alpha$ ranges from $0$ to $\infty$ is called the \emph{offsets filtration} $\{P^\alpha\}$.

    The \emph{doubling dimension} of a metric space is $\log_2 \gamma$, where $\gamma$ is the maximum over all balls $B$, of the minimum number of balls of half the radius of $B$ required to cover $B$.
    Metric spaces with a small constant doubling dimension are called \emph{doubling metrics}.
    Such metrics allow for packing arguments similar to those used in Euclidean geometry.
    \appdx{\appendixMode}{
    For example, consider the following simple exercise.
    If a set of points in a metric of doubling dimension $\rho$ are pairwise of distance at least $\epsilon$ apart and all contained in a ball of radius $c\epsilon$, then there are fewer than $(2c)^\rho$ points.
    }{}
    

  \paragraph{Simplicial Complexes.} 
    A \emph{simplicial complex} $K$ is a family of subsets of a vertex set that is closed under taking subsets.
    The sets $\sigma\in K$ are called \emph{simplices} and $|\sigma| - 1$ is called the \emph{dimension} of $\sigma$.
    A nested family of simplicial complexes is called a \emph{simplicial filtration}.
    Often the family of complexes will be parameterized by a nonnegative real number as in $\{K^\alpha\}_{\alpha\ge 0}$.
    Here, the filtration property guarantees that $\alpha\le \beta$ implies that $K^\alpha \subseteq K^\beta$.
    In this case, the value of $\alpha$ for which a simplex first appears is called its birth time, and so, if there is a largest complex $K^\alpha$ in the filtration, the whole filtration can be represented by $K^\alpha$ and the birth time of each simplex.
    For this reason, simplicial filtrations are often called \emph{filtered simplicial complex}.


  \paragraph{Persistent Homology.} 
    Homology is an algebraic tool for characterizing the connectivity of a space.
    It captures information about the connected components, holes, and voids.
    For this paper, we will only consider homology with field coefficients and the computations will all be on simplicial complexes.
    In this setting, computing homology is done by reducing a matrix $D$ called the boundary matrix of the simplicial complex.
    The boundary matrix has one row and column for each simplex.
    If the matrix reduction respects the order of a filtration, i.e.\ columns are only combined with columns to their left, then the reduced matrix also represents the so-called \emph{persistent homology} of the filtration.
    Persistent homology describes the changes in the homology as the filtration parameter changes and this information is often expressed in a \emph{barcode} (See Fig.~\ref{fig:barcode}).
    Barcodes give topological signatures of a shape~\cite{ghrist08barcodes}.

    \begin{figure}[htbp]
      \centering
      \appdx{\appendixMode}{
        \includegraphics[width=0.9\textwidth]{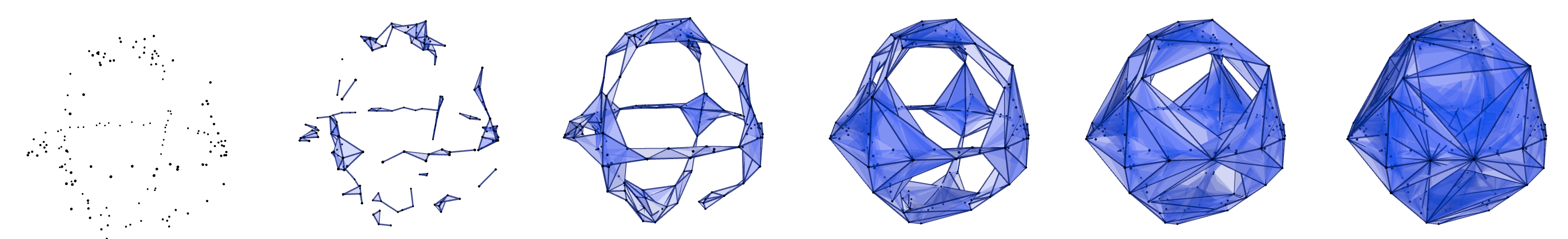}
        \includegraphics[width=0.9\textwidth]{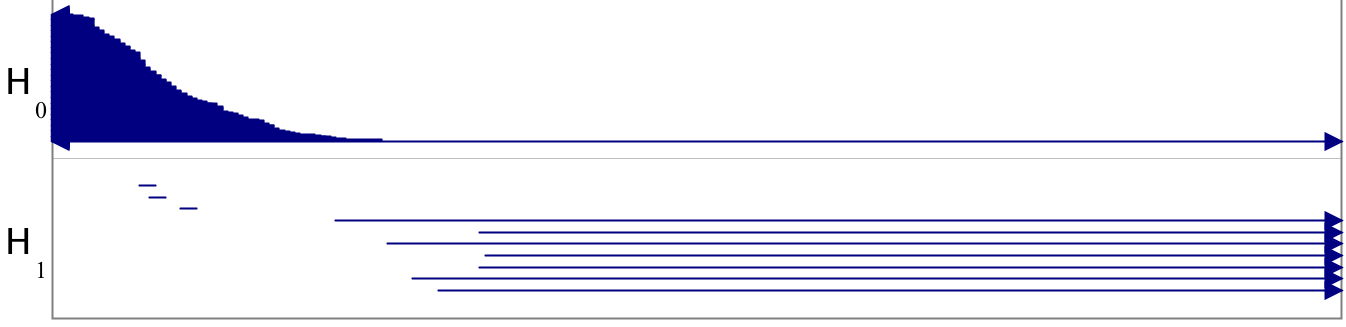}
      }{
        \includegraphics[width=0.45\textwidth]{figures/complexes}
        \includegraphics[width=0.45\textwidth]{figures/barcode}
      }
      \caption{A filtration and its barcode.}
      \label{fig:barcode}
    \end{figure}

    Each bar of a barcode is an interval encoding the lifespan of a topological feature in the filtration.
    We say that a barcode $B_1$ is a (multiplicative) $c$-approximation to another barcode $B_2$ if there is a partial matching between $B_1$ and $B_2$ such that every bar $[b,d]$ with $d/b>c$ is matched and every matched pair of bars $[b,d],[b',d']$ satisfies $\max\{b/b', b'/b, d/d', d'/d\}  \le c$.
    A standard result on the stability of barcodes~\cite{chazal09proximity} implies that if two filtrations $\{F^\alpha\}$ and $\{G^\alpha\}$ are $c$-interleaved in the sense that $F^{\alpha/c}\subseteq G^\alpha \subseteq F^{c\alpha}$, then the barcode of $\{F^\alpha\}$ is a $c$-approximation to $\{G^\alpha\}$.


  \paragraph{Nerve Complexes and Filtrations.} 

    Let $\mathcal{U} = \{U_1,\ldots,U_n\}$ be a collection of closed, convex sets.
    Let $\bigcup \mathcal{U}$ denote the union of the sets in $\mathcal{U}$, i.e.\ $\bigcup\mathcal{U} := \bigcup_{i=1}^n U_i$.
    We say that the set $\mathcal{U}$ is a \emph{cover} of the space $\bigcup \mathcal{U}$.
    The \emph{nerve} of $\mathcal{U}$, denoted $\nerve(\mathcal{U})$ is the abstract simplicial complex defined as
    \[
      \nerve(\mathcal{U}) : = \left\{I\subseteq [n] \mid \bigcap_{i\in I} U_i \neq \emptyset\right\}.
    \]
    This construction is illustrated in Fig.\ref{fig:nerve}.
    The Nerve Theorem~\cite[Cor. 4G.3]{hatcher01} implies that $\nerve(\mathcal{U})$ is homotopy equivalent to $\bigcup \mathcal{U}$.

    \begin{figure}
      \centering
      \appdx{\appendixMode}{
        \includegraphics[width=.40\textwidth]{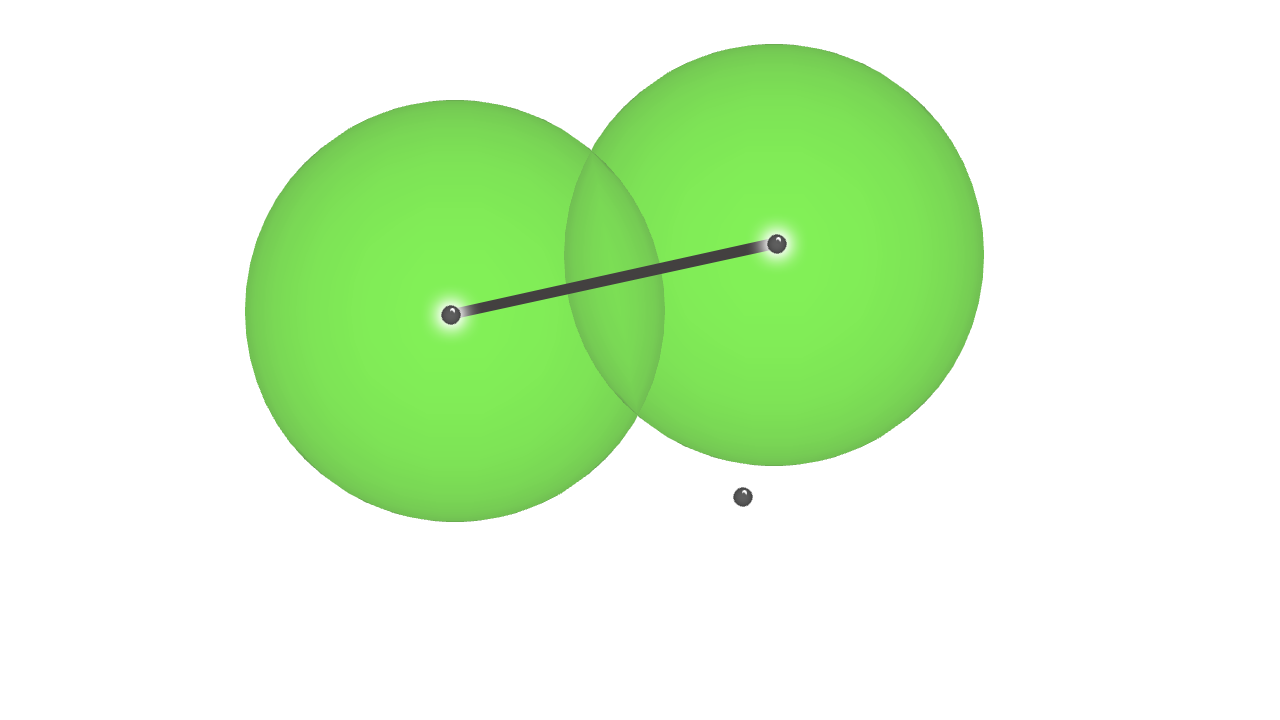}
        \includegraphics[width=.40\textwidth]{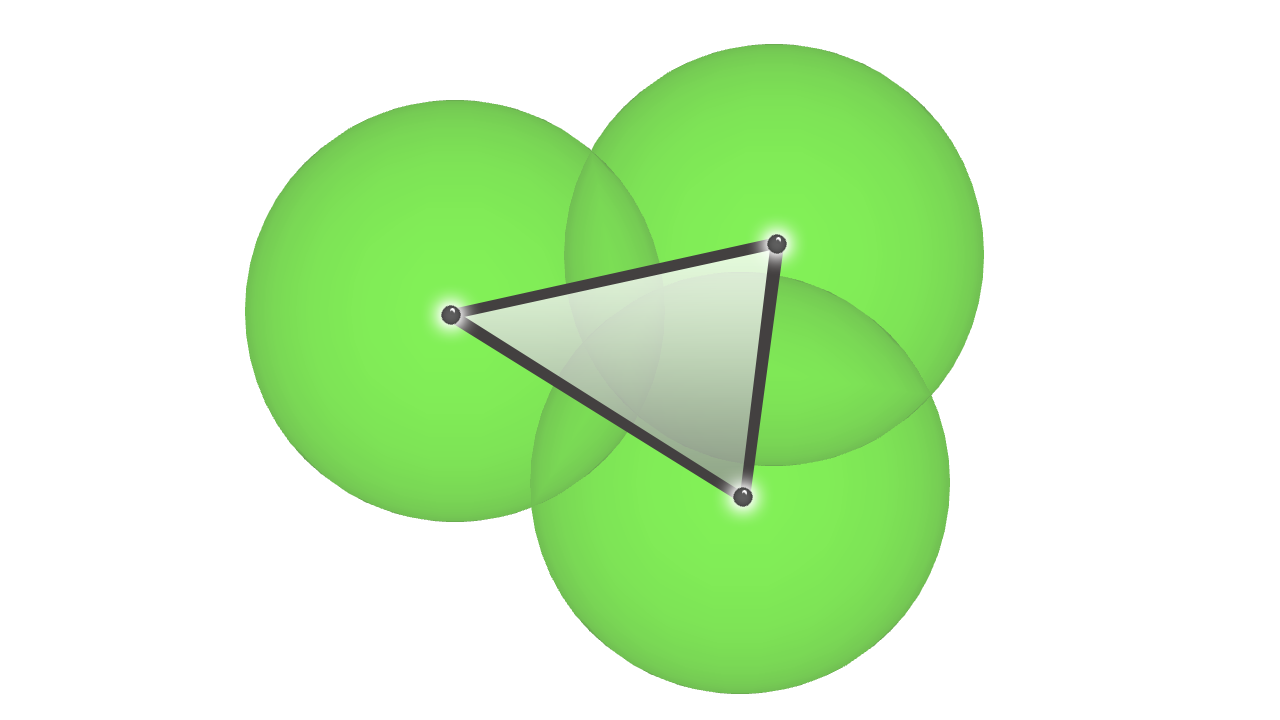}
      }{
        \includegraphics[width=.20\textwidth]{figures/nerve03}
        \includegraphics[width=.20\textwidth]{figures/nerve05}      
      }
      \caption{The nerve has an edge for each pairwise intersection, a triangle for each $3$-way intersection (right), etc.}
      \label{fig:nerve}
    \end{figure}

    Similarly, one can construct a nerve filtration from a cover of a filtration by filtrations.
    Specifically, let $\mathcal{U} = \{\{U_1^\alpha\},\ldots \{U_n^\alpha\}\}$ be a collection of filtrations parameterized by real numbers such that for each $i\in [n]$ and each $\alpha\ge 0$, the set $U_i^\alpha$ is closed and convex.
    As shorthand, we write $\mathcal{U}^\alpha$ to denote the set $\{U_1^\alpha,\ldots, U_n^\alpha\}$.
    As before, the Nerve Theorem implies that $\bigcup \mathcal{U}^\alpha$ is homotopy equivalent to $\nerve(\mathcal{U}^\alpha)$.
    The Persistent Nerve Lemma~\cite{chazal08towards} implies that the filtrations $\{\bigcup \mathcal{U}^\alpha\}_{\alpha\ge 0}$ and $\{\nerve(\mathcal{U}^\alpha)\}_{\alpha\ge 0}$ have identical persistent homology.


  \paragraph{\v Cech and Rips Filtrations.} 
  \label{par:_v_cech_and_rips_filtrations}
    A common filtered nerve is the \emph{\v Cech filtration}.
    It is defined as $\{\cech_\alpha(P)\}$, where
    \[
      \cech_\alpha(P) := \nerve\{\ball(p_i, \alpha) \mid i\in [n]\}.
    \]
    Notice that this is just the nerve of the cover of the $\alpha$-offsets by the $\alpha$-radius balls.
    Thus, the Persistent Nerve Lemma implies that $\{P^\alpha\}$ and $\{\cech_\alpha(P)\}$ have identical persistence barcodes.

    A similar filtration that is defined for any metric is called the \emph{(Vietoris-)Rips filtration} and is defined as $\{\rips_\alpha(P)\}$, where
    \[
      \rips_\alpha(P) := \{J\subseteq [n] \mid \max_{i,j\in J} \dist(p_i,p_j)\le 2\alpha\}.
    \]
    Note that if $\dist$ is the max-norm, $\ell_\infty$, then $\rips_\alpha(P) = \cech_\alpha(P)$.
    Moreover, because every finite metric can be isometrically embedded into $\ell_\infty$, every Rips filtration is isomorphic to a nerve filtration.


  \paragraph{Greedy Permutations.} 
    Let $P$ be a set of points in some metric space with distance $\dist$.
    A \emph{greedy permutation} of $P$ goes by many names, including landmark sets, farthest point sampling, and discrete center sets.
    We say that $P = \{p_1,\ldots,p_n\}$ is ordered according to a greedy permutation if each $p_i$ is the farthest point from the first $i-1$ points.
    We let $p_1$ be any point.
    Formally, let $P_i = \{p_1,\ldots,p_i\}$ be the $i$th \emph{prefix}.
    Then, the ordering is greedy if and only if for all $i\in\{2,\ldots,n\}$,
    \[
      \dist(p_i, P_{i-1}) = \max_{p\in P} \dist(p,P_{i-1}).
    \]
    For each point $p_i$, the value $\lambda_i := \dist(p_i, P_{i-1})$ is known as the \emph{insertion radius}.
    By convention, we set $\lambda_1 = \infty$.
    It is well-known (and easy to check) that $P_i$ is a $\lambda_i$-net in the sense that it satisfies the conditions:  for all distinct $p,q\in P_i$, $\dist(p,q) \ge \lambda_i$ (\textbf{packing}) and  $P\subseteq P_i^{\lambda_i}$ (\textbf{covering}).
                                                                                                                             

  \section{Perturbed Distances} 
\label{sec:perturbed_distances}

  A convenient first step in making a sparse version of the \v Cech filtration is to ``perturb'' the distance.
  Given a greedy permutation, we perturb the distance function so that as the radius increases, only a sparse subset of points continues to contribute to the offsets.
  This can most easily be viewed as changing the radius of the balls slightly so that some balls will be completely covered by their neighbors and thus will not contribute to the union.
  Fix a constant $\e<1$ that will control the sparsity.
  As we will show in Lemma~\ref{lem:covering}, at scale $\alpha$, there is an $\e\alpha$-net of $P$ whose perturbed offsets cover the perturbed offsets of $P$.
  Assuming the points $P=\{p_1,\ldots,p_n\}$ are ordered by a greedy permutation with insertion radii $\lambda_1,\ldots , \lambda_n$, we define the radius of $p_i$ at scale $\alpha$ as
  \[
    r_i(\alpha) :=
    \begin{cases}
      \alpha & \text{if } \alpha \le \lambda_i(1+\e)/\e \\
      \lambda_i(1+\e)/\e & \text{otherwise.}
    \end{cases}
  \]
  The \emph{perturbed $\alpha$-offsets} are defined as
  \[
    \pP^\alpha := \bigcup_{i\in [n]}\ball(p_i, r_i(\alpha)).
  \]
  To realize the sparsification as described, we want to remove balls associated with some of the points as the scale increases.
  This is realized by defining the $\alpha$-ball for a point $p_i\in P$ to be
  \[
    b_i(\alpha) :=
    \begin{cases}
      \ball(p_i,r_i(\alpha)) & \text{if } \alpha \le \lambda_i(1+\e)^2/\e \\
      \emptyset & \text{otherwise.}
    \end{cases}
  \]

  The usefulness of this perturbation is captured by the following covering lemma, which is depicted in the tops of the cones in Fig.~\ref{fig:cones_covered}.

  \begin{figure}
    \centering
    \appdx{\appendixMode}{
      \includegraphics[width=0.9\textwidth]{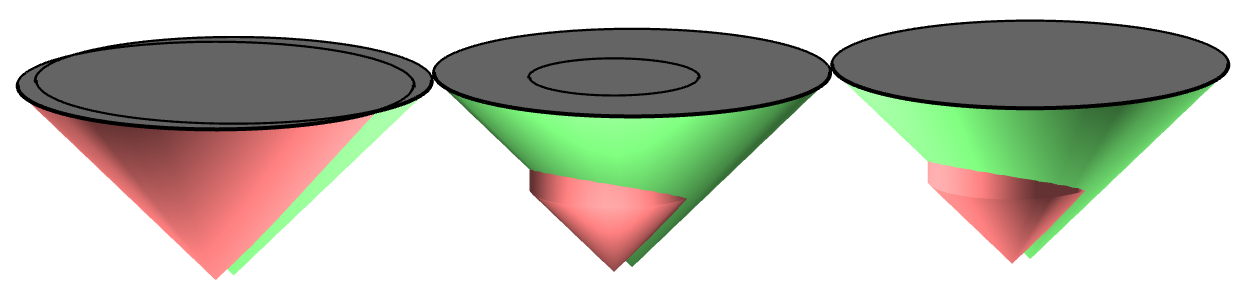}
    }{
      \includegraphics[width=0.45\textwidth]{figures/cones_covered}
    }
    \caption{Left: two growing balls trace out cones in one dimension higher.
    Center: One of the cones has a maximum radius.
    Right: Limiting the height of one cone guarantees that the top is covered.}
    \label{fig:cones_covered}
  \end{figure}

  \begin{lemma}[Covering Lemma]\label{lem:covering}
    Let $P = \{p_1,\ldots,p_n\}$ be a set of points ordered by a greedy permutation with insertion radii $\lambda_1,\ldots,\lambda_n$.
    For any $\alpha,\beta\ge 0$, and any $p_j\in P$, there exists a point $p_i\in P$ such that
    \begin{enumerate}
      \item if $\beta\ge \alpha$ then $b_j(\alpha)\subseteq b_i(\beta)$, and
      \item if $\beta\ge (1+\e)\alpha$, then $\ball(p_j,\alpha)\subseteq b_i(\beta)$.
    \end{enumerate}
  \end{lemma}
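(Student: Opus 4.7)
The plan is to split on whether the perturbed ball $b_j(\beta)$ is still nonempty at scale $\beta$, i.e., whether $\lambda_j(1+\e)^2/\e \ge \beta$. In the easy case, the height cap has not yet pruned $p_j$, so I will take $p_i = p_j$. Part~1 follows immediately from monotonicity: $r_j$ is non-decreasing in its argument, so $b_j(\alpha) \subseteq b_j(\beta)$ whenever $\beta \ge \alpha$. For Part~2, the hypothesis $\beta \ge (1+\e)\alpha$ together with the inequality $\lambda_j(1+\e)/\e \ge \beta/(1+\e)$ (which comes from $b_j(\beta)\neq\emptyset$) gives $\alpha \le \beta/(1+\e) \le r_j(\beta)$, as required.

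The interesting case is when $b_j(\beta)=\emptyset$. Here I need to fall back to an earlier point in the greedy permutation whose ball at scale $\beta$ is still alive and large enough to absorb everything near $p_j$. I will take $i^{*} := \max\{i : \lambda_i \ge \e\beta/(1+\e)\}$, which is well-defined since $\lambda_1 = \infty$. This threshold is chosen precisely so that for every $i \le i^{*}$ the cap $\lambda_i(1+\e)/\e$ is at least $\beta$, giving $r_i(\beta) = \beta$ and a full radius-$\beta$ ball to work with. Applying the covering property of the greedy permutation to $P_{i^{*}}$, every point of $P$ lies within $\lambda_{i^{*}+1} < \e\beta/(1+\e)$ of some element of $P_{i^{*}}$; I let $p_i$ be such a nearest neighbor of $p_j$.

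What remains is a triangle-inequality calculation. The case hypothesis $b_j(\beta)=\emptyset$ forces $\lambda_j < \e\beta/(1+\e)^2$, and hence $r_j(\alpha) \le \lambda_j(1+\e)/\e < \beta/(1+\e)$, so for Part~1 every point of $b_j(\alpha)$ sits within $\beta/(1+\e) + \e\beta/(1+\e) = \beta$ of $p_i$ and therefore lies in $b_i(\beta)$. For Part~2, $\beta \ge (1+\e)\alpha$ gives $\alpha \le \beta/(1+\e)$, and the same arithmetic shows $\ball(p_j,\alpha) \subseteq b_i(\beta)$. The main subtlety, in my view, is the choice of the threshold defining $i^{*}$: the more natural-looking choice of requiring just $b_{i^{*}}(\beta)\neq\emptyset$ (i.e., $\lambda_{i^{*}} \ge \e\beta/(1+\e)^2$) loses a factor of $(1+\e)$ in the cap on $r_{i^{*}}(\beta)$ and makes the triangle inequality barely miss, whereas insisting that the radius at $i^{*}$ be uncapped at scale $\beta$ is exactly what makes the two contributions $\beta/(1+\e)$ and $\e\beta/(1+\e)$ sum to $\beta$ on the nose.
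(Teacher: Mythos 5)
Your proof is correct and takes essentially the same approach as the paper's: the same case split on whether $b_j(\beta)=\emptyset$, the same choice of a nearby net point with $\lambda_i\ge\e\beta/(1+\e)$ (so that $r_i(\beta)=\beta$), and the same triangle-inequality arithmetic. The only difference is that you make the invocation of the covering property explicit via the index $i^{*}$, which the paper leaves as a one-line appeal to the net property.
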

  \begin{proof}
    Fix any $p_j\in P$.
    We may assume that $\beta \ge \lambda_j(1+\e)^2/\e$, for otherwise, choosing $p_i = p_j$ suffices to satisfy both clauses, the first because $b_j(\alpha)\subseteq b_j(\beta)$ and the second because $\ball(p_j,\alpha) = b_j(\alpha) \subseteq b_j(\beta)$.
    This assumption is equivalent to the assumption that $b_j(\beta) = \emptyset$.

    By the covering property of the greedy permutation, there is a point $p_i\in P$ such that $\dist(p_i,p_j)\le \e\beta/(1+\e)$ and $\lambda_i\ge \e\beta/(1+\e)$.
    It follows that $r_i(\beta) = \beta$ and $b_i(\beta) = \ball(p_i, \beta)$.
    Recall that $\lambda_1 = \infty$ by convention, so $b_1(\beta)\neq \emptyset$, and for large values of $\beta$, choosing $p_i = p_1$ suffices.

    To prove the first clause, fix any point $x\in b_j(\alpha)$.
    By the triangle inequality,
    \begin{align*}
      \dist(x,p_i)
        & \le \dist(x, p_j) + \dist(p_i,p_j)
        \le r_j(\alpha) + \e\beta/(1+\e)\\
        & \le \lambda_j(1+\e)/\e + \e\beta/(1+\e)
        \le \beta
        = r_i(\beta).
    \end{align*}
    So, $x\in b_i(\beta)$ and thus, $b_j(\alpha)\subseteq b_i(\beta)$ as desired.

    To prove the second clause of the lemma, fix any $x\in \ball(p_j,\alpha)$.
    By the triangle inequality,
    \begin{align*}
      \dist(x, p_i)
        & \le \dist(x, p_j) + \dist(p_i, p_j) \le \alpha + \e\beta/(1+\e)\\
        & \le \beta/(1+\e) + \e\beta/(1+\e)= r_i(\beta).
    \end{align*}
    So, as before, $x\in b_i(\beta)$ and thus, $\ball(p_j,\alpha)\subseteq b_i(\beta)$ as desired.
  \end{proof}




  \begin{corollary}\label{cor:perturbed_covering}
    Let $P = \{p_1,\ldots,p_n\}$ be a set of points ordered by a greedy permutation with insertion radii $\lambda_1,\ldots,\lambda_n$.
    For all $\alpha\geq 0$, $\pP^\alpha= \bigcup_i b_i(\alpha)$ and $\pP^\alpha\subseteq P^\alpha\subseteq\pP^{(1+\e)\alpha}$.
  \end{corollary}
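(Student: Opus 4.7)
My plan is to prove the three claims—the equality $\pP^\alpha = \bigcup_i b_i(\alpha)$ and the interleaving $\pP^\alpha \subseteq P^\alpha \subseteq \pP^{(1+\e)\alpha}$—with one short argument each. The middle inclusion will drop out of the radius definition, and the other two claims will each follow from a single application of the appropriate clause of the Covering Lemma.

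I would begin with $\pP^\alpha \subseteq P^\alpha$: since $r_i(\alpha) = \min(\alpha, \lambda_i(1+\e)/\e) \le \alpha$, each ball $\ball(p_i,r_i(\alpha))$ sits inside $\ball(p_i,\alpha)$, and taking the union over $i$ is exactly the claim. Next I would do $P^\alpha \subseteq \pP^{(1+\e)\alpha}$: for $x \in \ball(p_j,\alpha)$, applying clause $2$ of the Covering Lemma with $\beta = (1+\e)\alpha$ produces a $p_i$ with $\ball(p_j,\alpha) \subseteq b_i((1+\e)\alpha)$, and $b_i((1+\e)\alpha) \subseteq \ball(p_i, r_i((1+\e)\alpha)) \subseteq \pP^{(1+\e)\alpha}$ by definition.

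The equality $\pP^\alpha = \bigcup_i b_i(\alpha)$ is the most delicate part. The containment $\supseteq$ is immediate because each nonempty $b_i(\alpha)$ equals $\ball(p_i, r_i(\alpha))$. For the reverse, I would fix $x \in \ball(p_j, r_j(\alpha))$. If $b_j(\alpha) \ne \emptyset$ then $x \in b_j(\alpha)$ and we are done, so suppose $\alpha > \lambda_j(1+\e)^2/\e$. The key observation is that at the critical scale $\alpha' := \lambda_j(1+\e)^2/\e$, the perturbed radius has already saturated at $\lambda_j(1+\e)/\e$, so $b_j(\alpha') = \ball(p_j, \lambda_j(1+\e)/\e) = \ball(p_j, r_j(\alpha))$. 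Invoking clause $1$ of the Covering Lemma with this $\alpha'$ and $\beta = \alpha \ge \alpha'$ then produces a $p_i$ with $\ball(p_j, r_j(\alpha)) = b_j(\alpha') \subseteq b_i(\alpha)$, which contains $x$.

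The main obstacle I anticipate is precisely this last bookkeeping step. Naively applying clause $1$ of the Covering Lemma at the current scale $\alpha$ yields the vacuous statement $\emptyset \subseteq b_i(\alpha)$ when $b_j(\alpha)$ has been dropped. The trick is to slide down to the largest scale at which $p_j$'s ball has not yet disappeared—where, fortuitously, that ball already matches the ball $\ball(p_j, r_j(\alpha))$ that remains in $\pP^\alpha$—and apply the lemma there.
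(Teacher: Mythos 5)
Your proof is correct and follows essentially the same route as the paper's: the same three-part decomposition, the same use of $r_j(\alpha)\le\alpha$ for $\pP^\alpha\subseteq P^\alpha$, and the same application of clause 2 of the Covering Lemma at $\beta=(1+\e)\alpha$ for $P^\alpha\subseteq\pP^{(1+\e)\alpha}$. The only (valid) variation is in the hard case of the equality, where you apply clause 1 at the critical scale $\alpha'=\lambda_j(1+\e)^2/\e$, whereas the paper applies clause 2 at the scale $\gamma=r_j(\alpha)$ after observing $\alpha>(1+\e)\gamma$; both yield the needed containment $\ball(p_j,r_j(\alpha))\subseteq b_i(\alpha)$.
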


  \appdx{\appendixMode}{
  \begin{proof}
    We will first show that $\pP^\alpha= \bigcup_i b_i(\alpha)$.

    Fix any $\alpha\ge 0$.
    For all $j\in[n]$, $b_j(\alpha)\subseteq \ball(p_j, r_j(\alpha))$, so
    \begin{equation}\label{eq:Ub_subset_pP}
      \bigcup\limits_{j\in[n]}b_j(\alpha)\subseteq\bigcup \limits_{j\in[n]}\ball(p_j,r_j(\alpha))=\pP^\alpha.
    \end{equation}

    To show that $\pP=\ball(p_j,r_j(\alpha))\subseteq \bigcup \limits_{j\in[n]}b_j(\alpha)$, we have two cases. If $\alpha\leq\frac{\lambda_j(1+\e)^2}{\e}$, then $b_j(\alpha)=\ball(p_j,r_j(\alpha))$.
    Else $\alpha>\frac{\lambda_j(1+\e)^2}{\e}$, which implies that $r_j(\alpha)=\frac{\lambda_j(1+\e)}{\e}$.
    Let $\gamma=r_j(\alpha)$, which implies $r_j(\gamma)=\gamma$ and $\alpha>(1+\e)\gamma,$ so there exists $i$ such that $\ball(p_j,\gamma)\subseteq b_i(\alpha)$ and equivalently $\ball(p_j,r_j(\alpha))\subseteq b_i(\alpha)$.
    Thus,
    \begin{equation}\label{eq:pP_subset_Ub}
      \pP=\bigcup \limits_{j\in[n]}\ball(p_j,r_j(\alpha))\subseteq \bigcup \limits_{j\in[n]}b_j(\alpha).
    \end{equation}
    So \eqref{eq:Ub_subset_pP} and~\eqref{eq:pP_subset_Ub} imply that $\pP^\alpha= \bigcup_i b_i(\alpha)$.

    Now, we will prove that $\pP^\alpha\subseteq P^\alpha\subseteq\pP^{(1+\e)\alpha}$.
    \begin{equation}\label{eq:pP_subset_P}
      \pP=\bigcup \limits_{j\in[n]}\ball(p_j,r_j(\alpha))\subseteq \bigcup \limits_{j\in[n]}\ball(p_j,\alpha)=P^\alpha,
    \end{equation}
    because $r_j(\alpha)\leq \alpha$.
    Let $\beta=(1+\e)\alpha$, then for all $j\in[n]$ there exists $i$ such that $\ball(p_j,\alpha)\subseteq b_i(\beta)$ by statement 2 in Lemma ~\ref{lem:covering}, implying
    \begin{equation}\label{eq:Palpha_subset_pPbeta}
      P^\alpha=\bigcup \limits_{j\in[n]}\ball(p_j,\alpha)\subseteq \bigcup\limits_{j\in[n]}b_j(\beta)=\pP^\beta =\pP^{(1+\e)\alpha}.
    \end{equation}
    Thus \eqref{eq:pP_subset_P} and \eqref{eq:Palpha_subset_pPbeta} imply that $\pP^\alpha\subseteq P^\alpha\subseteq \pP^{(1+\e)\alpha}$
    \end{proof}
  }{A proof may be found in the full paper~\cite{cavanna15geometric_arxiv}.}
  Corollary \ref{cor:perturbed_covering} implies the following proposition using standard results on the stability of persistence barcodes ~\cite{chazal09proximity}.

  \begin{proposition}\label{prop:perturbed_offsets_are_close}
    The persistence barcode of the perturbed offsets $\{\pP^\alpha\}_{\alpha\ge 0}$ is a $(1+\e)$-approximation to the persistence barcode of the offsets $\{P^\alpha\}_{\alpha\ge 0}$.
  \end{proposition}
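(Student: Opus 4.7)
The plan is to reduce the proposition to the standard barcode stability theorem stated in the background, which asserts that a $c$-interleaving of two filtrations yields a $c$-approximation between their barcodes. So the entire task is to verify a $(1+\e)$-interleaving between $\{\pP^\alpha\}_{\alpha\ge 0}$ and $\{P^\alpha\}_{\alpha\ge 0}$, with the constant $c = 1+\e$ drawn directly from Corollary~\ref{cor:perturbed_covering}.

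Concretely, I would proceed as follows. First, unpack the definition of a $c$-interleaving: we need $\pP^{\alpha/(1+\e)}\subseteq P^\alpha\subseteq \pP^{(1+\e)\alpha}$ for every $\alpha\ge 0$. The right-hand inclusion is one of the two containments delivered by Corollary~\ref{cor:perturbed_covering} (applied at scale $\alpha$), so there is nothing more to do there. For the left-hand inclusion, I combine the other containment of the corollary, $\pP^\beta\subseteq P^\beta$, applied at scale $\beta=\alpha/(1+\e)$, with the trivial monotonicity $P^{\alpha/(1+\e)}\subseteq P^\alpha$ (which holds because $P^\gamma$ is a union of balls whose radii are monotone in $\gamma$). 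Chaining these two gives $\pP^{\alpha/(1+\e)}\subseteq P^{\alpha/(1+\e)}\subseteq P^\alpha$, completing the interleaving.

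Second, having established the $(1+\e)$-interleaving of the two filtrations of \emph{spaces}, I would invoke the Persistent Nerve Lemma (or, equivalently, observe that the filtrations in question are filtrations of closed subsets of $\R^d$ for which the stability of persistence barcodes under inclusion interleavings applies directly, as cited from~\cite{chazal09proximity} in the background section). This hands over a $(1+\e)$-approximation between the barcodes of $\{\pP^\alpha\}$ and $\{P^\alpha\}$, which is precisely the claim.

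Since each ingredient is already in place, there is no genuine obstacle; the only thing to be careful about is matching the interleaving convention exactly, i.e., ensuring that the asymmetric containments $\pP^\alpha\subseteq P^\alpha\subseteq \pP^{(1+\e)\alpha}$ really do yield $\pP^{\alpha/(1+\e)}\subseteq P^\alpha\subseteq \pP^{(1+\e)\alpha}$ when combined with monotonicity. This verification is short, and once it is in hand the proposition follows as an immediate corollary of Corollary~\ref{cor:perturbed_covering} and the standard stability theorem.
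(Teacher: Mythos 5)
Your proposal is correct and takes essentially the same route as the paper, which simply invokes Corollary~\ref{cor:perturbed_covering} together with the stability theorem for $c$-interleaved filtrations; your explicit check that $\pP^{\alpha/(1+\e)}\subseteq P^{\alpha/(1+\e)}\subseteq P^\alpha\subseteq\pP^{(1+\e)\alpha}$ is exactly the short verification the paper leaves implicit. The only slip is terminological: the Persistent Nerve Lemma is irrelevant here (it relates covers to their nerves), and the correct tool is the barcode stability result of~\cite{chazal09proximity}, which you also cite.
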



  \section{Sparse Filtrations} 
\label{sec:sparse_filtrations}

  The \emph{sparse \v Cech complex} is defined as $Q^\alpha := \nerve\{b_i(\alpha)  \mid i\in [n]\}$.
  Notice that because $b_i(\alpha) = \emptyset$ unless $\lambda_i$ is sufficiently large compared to $\alpha$, there are fewer vertices as the scale increases.
  This is the desired sparsification.
  Unfortunately, it means that the set of complexes $\{Q^\alpha\}$ is not a filtration, but this is easily remedied by the following definition.
  The \emph{sparse \v Cech filtration} is defined as $\{S^\alpha\}$, where
  \[
    S^\alpha := \bigcup_{\delta\le \alpha} Q^\delta = \bigcup_{\delta \le \alpha} \nerve\{b_i(\delta)  \mid i\in [n]\}.
  \]

  This definition makes it clear that the sparse complex is a union of nerves, but it not obvious that it has the same persistent homology as the filtration defined by the perturbed offsets $\pP^\alpha := \bigcup_{i}b_i(\alpha)$.
  For such a statement, it would be much more convenient if $\{S^\alpha\}$ was itself a nerve filtration rather than a union of nerves, in which case the Persistent Nerve Lemma could be applied directly.
  In fact, this can be done by adding an extra dimension corresponding to the filtration parameter extending the balls $b_i(\alpha)$ into the perturbed cone shapes
  \[
    U_i^\alpha:=\bigcup_{\delta \le \alpha} (b_i(\delta) \times \{\delta\}).
  \]
  These sets, depicted in Figs.~\ref{fig:cones_covered} and~\ref{fig:all_cones}, allow the following equivalent definition of the complexes in the sparse \v Cech filtration.
  \[
    S^\alpha := \nerve\left\{U_i^\alpha \mid i\in [n]\right\}.
  \]

  \appdx{\appendixMode}{
  \begin{proposition}\label{prop:convexity}
    If $\dist$ is a convex metric and $r_i$ is a concave function then $U_i^\alpha:=\bigcup_{\delta \le \alpha} (b_i(\delta) \times \{\delta\})$ is convex.
  \end{proposition}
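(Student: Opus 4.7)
The plan is to show convexity by a direct argument: take any two points in $U_i^\alpha$ and verify that their entire connecting segment lies in $U_i^\alpha$. A generic point in $U_i^\alpha$ has the form $(x,\delta)$ with $\delta \le \alpha$, $b_i(\delta) \ne \emptyset$ (so $\delta \le \lambda_i(1+\e)^2/\e$), and $x\in \ball(p_i,r_i(\delta))$. Given two such points $(x_1,\delta_1)$ and $(x_2,\delta_2)$, I take the convex combination $(z,\delta) = t(x_1,\delta_1)+(1-t)(x_2,\delta_2)$ for $t\in[0,1]$ and check membership in $U_i^\alpha$.

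The height coordinate is immediate: since $\delta_1,\delta_2 \le \alpha$ and $\delta_1,\delta_2 \le \lambda_i(1+\e)^2/\e$, the same holds for their convex combination $\delta$, so $b_i(\delta)$ is a genuine ball $\ball(p_i,r_i(\delta))$ rather than empty. The substantive step is to bound $\dist(z,p_i)$ by $r_i(\delta)$. Here the two hypotheses feed in one after the other. First, because $\dist$ is a convex metric, the function $x\mapsto \dist(x,p_i)$ is convex, so
\[
\dist(z,p_i) \le t\,\dist(x_1,p_i) + (1-t)\,\dist(x_2,p_i) \le t\, r_i(\delta_1) + (1-t)\, r_i(\delta_2).
\]
Second, because $r_i$ is concave, $t\, r_i(\delta_1) + (1-t)\, r_i(\delta_2) \le r_i(t\delta_1+(1-t)\delta_2) = r_i(\delta)$. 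Chaining these inequalities gives $z\in \ball(p_i,r_i(\delta)) = b_i(\delta)$, so $(z,\delta)\in U_i^\alpha$.

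The only possible obstacle is a bookkeeping one rather than a mathematical one: making sure the truncation of $b_i$ to $\emptyset$ above $\lambda_i(1+\e)^2/\e$ does not create a non-convex ``cut'' in the cone. This is why the restriction of the union to $\delta\le\alpha$ is taken \emph{inside} $U_i^\alpha$'s definition and why the upper truncation occurs exactly at the height where $r_i$ plateaus; together they ensure that the cross-sectional radius is a concave function of height all the way up to the top of the set, with no discontinuous drop to be worried about. I would also remark that the definition of $r_i$ given in Section~\ref{sec:perturbed_distances} is manifestly concave (it is the minimum of two affine functions $\alpha\mapsto\alpha$ and $\alpha\mapsto\lambda_i(1+\e)/\e$), so the concavity hypothesis is satisfied in the setting of the paper, and the proposition will apply immediately when invoked.
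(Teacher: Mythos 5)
Your proof is correct and follows essentially the same route as the paper's: convexity of $\dist$ gives $\dist(z,p_i)\le t\,\dist(x_1,p_i)+(1-t)\,\dist(x_2,p_i)$, and concavity of $r_i$ closes the chain to $r_i(\delta)$. Your extra bookkeeping on the height coordinate (checking that the convex combination of heights stays below the truncation threshold so $b_i(\delta)$ is a genuine ball) is a small point the paper's proof leaves implicit, and is a welcome addition.
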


  \begin{proof}
  Given two points $(a,\delta_a)$, $(b,\delta_b)\in U_i^\alpha$, $\dist(a,p_i)\leq r_i(\delta_a)$ and likewise $\dist(b,p_i)\leq r_i(\delta_b)$ by definition of $r_i$.
  Let $c=(1-t)a+tb$ and let $\delta_c=(1-t)\delta_a+t\delta_b$, for $t\in[0,1]$.
  Now we bound $\dist(c,p_i)$ as follows.
  \begin{align*}
  \dist(c,p_i)&\leq(1-t)\dist(a,p_i)+t\dist(b,p_i)\because{$\dist$ is convex}\\
    &\leq (1-t)r_i(\delta_a)+ tr_i(\delta_b) \\
    &\leq r_i(\delta_c) \because{$r_i$ is concave}
  \end{align*}
  Thus we can conclude that $(c,\delta_c)$, a convex combination of arbitrary $(a,\delta_a)$ and $(b,\delta_b)$, is in $U_i^\alpha$ and $U_i^\alpha$ is convex.
  \end{proof}
  }{}

  \begin{theorem}\label{thm:main_thm}
    The persistence barcode of the sparse nerve filtration $\{S^\alpha\}_{\alpha\ge 0}$ is a $(1+\e)$-approximation to the persistence barcode of the offsets $\{P^\alpha\}_{\alpha\ge 0}$.
  \end{theorem}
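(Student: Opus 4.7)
The plan is to reduce the theorem to Proposition~\ref{prop:perturbed_offsets_are_close} by showing that the sparse nerve filtration $\{S^\alpha\}$ has the same persistence barcode as the perturbed offsets $\{\pP^\alpha\}$. This reduction splits into two steps: one combinatorial (the Persistent Nerve Lemma) and one geometric (a deformation retraction in one dimension higher).

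First I would verify the hypotheses of Proposition~\ref{prop:convexity}. The function $r_i(\alpha) = \min(\alpha, \lambda_i(1+\e)/\e)$ is the pointwise minimum of two concave functions (a linear one and a constant) and is therefore concave. Together with the standing assumption that $\dist$ is a convex metric, this makes each $U_i^\alpha$ a closed, convex subset of $\R^d \times \R$. Since $S^\alpha = \nerve\{U_i^\alpha \mid i\in[n]\}$ by the equivalent definition given just before the theorem, the Persistent Nerve Lemma applied to the filtered cover $\{\{U_i^\alpha\}_{i\in[n]}\}_{\alpha\ge 0}$ yields that $\{S^\alpha\}$ and $\{\bigcup_i U_i^\alpha\}$ have identical persistent homology.

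Next I would show that the projection $\pi\colon \R^d \times \R \to \R^d$, $(x,\delta)\mapsto x$, restricts to a homotopy equivalence $\bigcup_i U_i^\alpha \to \pP^\alpha$ at every scale, and that these maps commute with the filtration inclusions. The explicit deformation retraction is the straight-line homotopy $H((x,\delta),t) = (x, (1-t)\delta + t\alpha)$ onto the fiber over height $\alpha$, which is identified with $\pP^\alpha$ via projection. The key point is that this homotopy stays inside $\bigcup_i U_i^\alpha$: since each $r_j$ is non-decreasing, the perturbed offsets $\pP^s$ are monotone in $s$, and Corollary~\ref{cor:perturbed_covering} gives $\pP^s = \bigcup_j b_j(s)$, so at every intermediate height $s\in[\delta,\alpha]$ the point $x$ lies in some $b_j(s)$, i.e.\ $(x,s)\in U_j^\alpha$. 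Since $\pi$ clearly commutes with the inclusions $\bigcup_i U_i^\alpha \hookrightarrow \bigcup_i U_i^{\alpha'}$ and $\pP^\alpha \hookrightarrow \pP^{\alpha'}$, functoriality of persistent homology gives that $\{\bigcup_i U_i^\alpha\}$ and $\{\pP^\alpha\}$ have identical barcodes.

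Composing these two equalities with Proposition~\ref{prop:perturbed_offsets_are_close} yields the desired $(1+\e)$-approximation between $\{S^\alpha\}$ and $\{P^\alpha\}$. The main subtlety to handle carefully will be the deformation retraction, because the individual sets $b_i$ are \emph{not} monotone in the scale (each one becomes empty once $\alpha$ exceeds $\lambda_i(1+\e)^2/\e$); the argument must exploit that the \emph{union} $\pP^\alpha$ is nevertheless monotone in order to certify that the upward flow never leaves the cover, even when the particular ball that originally contained $x$ has already disappeared.
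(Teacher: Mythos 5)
Your proposal is correct and follows essentially the same route as the paper: convexity of the $U_i^\alpha$ via concavity of $r_i$, the Persistent Nerve Lemma, the projection $(x,\delta)\mapsto x$ as a homotopy equivalence commuting with inclusions, and then Proposition~\ref{prop:perturbed_offsets_are_close}. Your explicit straight-line retraction onto the top fiber, justified by monotonicity of $\pP^s=\bigcup_j b_j(s)$ (equivalently, clause~1 of the Covering Lemma), is a more careful spelling-out of the step the paper dispatches with the remark that each fiber of the projection is simply connected.
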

  \begin{proof}
    For all $i$, the set $U_i^\alpha$ is convex because $r_i$ is concave %
    \appdx{\appendixMode}{
    by Proposition~\ref{prop:convexity}}{
    (see the full paper~\cite{cavanna15geometric_arxiv} for a proof.)
    }
    It follows that the sets $U_i^\alpha$ satisfy the conditions of the Persistent Nerve Lemma.
    So, $\{S^\alpha\}$ has the same persistence barcode as the filtration $\{B^\alpha\}$, where $B^\alpha := \bigcup_{i}U_i^\alpha$.
    \begin{figure}[htbp]
      \centering
      \appdx{\appendixMode}{
        \includegraphics[width=0.9\textwidth]{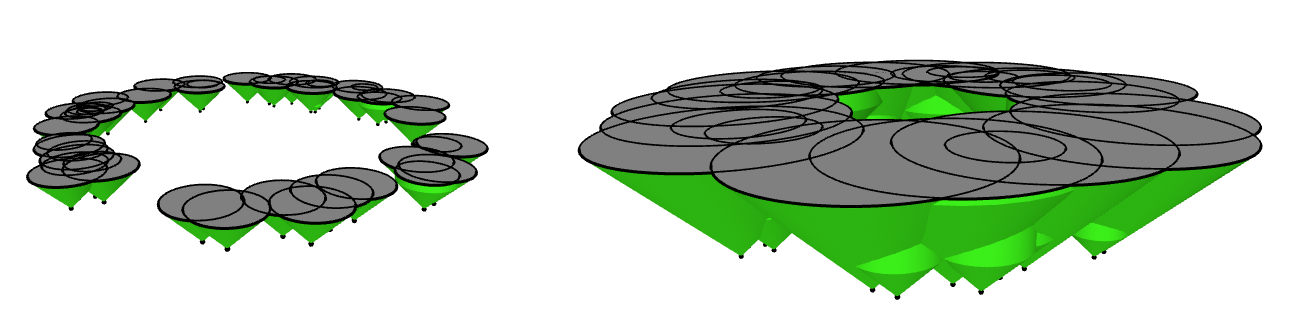}
      }{
        \includegraphics[width=0.45\textwidth]{figures/all_cones}
      }
      \caption{The collection of cones $B^\alpha$ at two different scales.
      The top of the cones is the union of (perturbed) balls.}
      \label{fig:all_cones}
    \end{figure}

    The Covering Lemma implies that the linear projection of $B^\alpha$ to $\pP^\alpha$ that maps $(x,\delta)$ to $x$ is a homotopy equivalence as each fiber is simply connected.
    Moreover, the projection clearly commutes with the inclusions $B^\alpha\hookrightarrow B^\beta$ and $\pP^\alpha\hookrightarrow \pP^\beta$, from which, it follows that $\pers\{\pP^\alpha\} = \pers\{B^\alpha\} = \pers\{S^\alpha\}$.
    So, the claim now follows from Proposition~\ref{prop:perturbed_offsets_are_close}.
  \end{proof}

  \section{Algorithms} 
\label{sec:algorithms}  
  In previous work, it was shown how to use metric data structures~\cite{har-peled06fast} to compute the sparse Rips filtration in $O(n\log n)$ time~\cite{sheehy13linear} when the doubling dimension is constant.
  The same approach also works for the sparse nerve filtrations described here.
  However, it depends on the construction of a net-tree~\cite{har-peled06fast}, which is an intricate data structure.  
  
  In this section, we present a simpler technique to construct a sparse nerve filtration from a greedy permutation of a finite metric $(P,\dist)$.
  Throughout, we assume that the doubling dimension of $(P,\dist)$ is constant.
  We show how to construct a sparse nerve filtration in linear time from the greedy permutation. 
  Our approach starts with finding all edges and their birth times.

  Let $G$ be a directed graph whose vertices are the points of $P$ and whose edges are the edges of the sparse nerve filtration of $P$ directed from smaller to larger insertion radius.
  In Section~\ref{sub:finding_neighborhoods_suffices}, it is shown that for each directed edge $(p_i,p_j)$ in $G$, $\dist(p_i,p_j)\le \kappa \lambda_i$, for a constant $\kappa$.
  This reduces the problem of finding the edges of the filtration to the problem of finding points in a given neighborhood.
  Moreover, we show that the out-degree of a vertex in $G$ is constant.
  Then, in Section~\ref{sub:how_to_find_neighborhoods_using_a_greedy_permutation}, we present an algorithm to construct $G$ from the greedy permutation and show that it runs in linear time.
  Finally, in Section~\ref{sub:higher_dimensional_simplices}, we give an algorithm for building higher dimensional simplices using the directed graph and bound its running time.
 
\subsection{Finding Neighborhoods Suffices} 
\label{sub:finding_neighborhoods_suffices}
  The vertices adjacent to $p_i$ in the directed graph $G$ are the points $p_j$ with insertion radius at least that of $p_i$ such that their corresponding balls intersect at some scale $\alpha$.
  The following lemma shows that these points have distance at most a constant times $\lambda_i$ to $p_i$.
  Then, Lemma~\ref{lem:constant_size_neighborhoods} will use this fact to show that the number of adjacent vertices is at most a constant.
  
  \begin{lemma}\label{lem:edges}
    For a given point $p_i$ with insertion radius $\lambda_i$ in the directed graph $G$, all adjacent points to $p_i$ are located in a $\ball(p_i,\kappa \lambda_i)$, where $\kappa=\frac{\e^2+3\e+2}{\e}$ and $\e>0$.
  \end{lemma}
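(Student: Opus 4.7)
The plan is to unpack what it means for $p_j$ to be adjacent to $p_i$ in $G$ and then bound $\dist(p_i,p_j)$ using the triangle inequality on the radii of intersecting balls.

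First, I would observe that if $(p_i,p_j)$ is a directed edge in $G$, then by the orientation convention $\lambda_j \ge \lambda_i$, and there exists some scale $\alpha$ at which the balls $b_i(\alpha)$ and $b_j(\alpha)$ both intersect. In particular, both balls must be nonempty at $\alpha$, which (via the definition of $b_i$) forces
\[
\alpha \le \lambda_i(1+\e)^2/\e,
\]
since $\lambda_i$ is the smaller insertion radius and is therefore the binding constraint.

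Next, I would bound the two radii at this scale. From the definition of $r_i$, we always have $r_i(\alpha) \le \lambda_i(1+\e)/\e$. For the other ball, since $r_j(\alpha) \le \alpha$, the above bound on $\alpha$ gives $r_j(\alpha) \le \lambda_i(1+\e)^2/\e$. Notice that this is where the asymmetry matters: we lose a factor of $(1+\e)$ on the $p_j$ side because we are measuring everything against the smaller insertion radius $\lambda_i$.

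Finally, since $b_i(\alpha) \cap b_j(\alpha) \neq \emptyset$, the triangle inequality gives
\[
\dist(p_i,p_j) \le r_i(\alpha) + r_j(\alpha) \le \frac{\lambda_i(1+\e)}{\e} + \frac{\lambda_i(1+\e)^2}{\e} = \frac{\lambda_i(1+\e)(2+\e)}{\e},
\]
and a short algebraic simplification turns $(1+\e)(2+\e)/\e$ into $(\e^2+3\e+2)/\e = \kappa$, giving the claim. There is no real obstacle here; the only thing that requires a moment of care is remembering to take the bound on $\alpha$ from the smaller of $\lambda_i,\lambda_j$, which is precisely the role of the edge direction in $G$.
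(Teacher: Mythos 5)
Your proof is correct and follows essentially the same argument as the paper: bound the scale $\alpha$ at which the two nonempty balls intersect by $\lambda_i(1+\e)^2/\e$, bound $r_i(\alpha)$ by $\lambda_i(1+\e)/\e$ and $r_j(\alpha)$ by $\alpha$, and apply the triangle inequality. The only cosmetic difference is that the paper evaluates both radii at the fixed scale $\lambda_i(1+\e)^2/\e$ while you work at the arbitrary intersection scale; the bounds are identical.
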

  \begin{proof}
    In the directed graph $G$, a vertex $p_j$ is adjacent to vertex $p_i$ if $\lambda_i\le \lambda_j$ and for some scale $\alpha$, $b_i(\alpha)\cap b_j(\alpha)\ne \emptyset$.
    These balls intersect before $p_i$ disappears, so
    \[
      b_i(\lambda_i(1+\e)^2/\e)\cap b_j(\lambda_i(1+\e)^2/\e)\ne \emptyset.
    \]
    The distance between $p_i$ and $p_j$ is bounded as follows.
    \begin{align*}
      \dist(p_i,p_j)
      & \le r_i(\lambda_i(1+\e)^2/\e)+r_j(\lambda_i(1+\e)^2/\e)\\
      & \le \lambda_i(1+\e)/\e+\lambda_i(1+\e)^2/\e\\
      & \le \frac{\e^2+3\e+2}{\e} \lambda_i.
    \end{align*}
    Thus, all adjacent vertices to $p_i$ lie in a ball with center $p_i$ and radius $\kappa\lambda_i$.
  \end{proof}
  
  \begin{lemma}\label{lem:constant_size_neighborhoods}
    For a point set $P$ ordered by a greedy permutation and with doubling dimension $\rho$, each $p_i\in P$ has $\kappa^{O(\rho)}$ neighbors in the directed graph $G$, where $\kappa=\frac{\e^2+3\e+2}{\e}$ and $\e>0$.
  \end{lemma}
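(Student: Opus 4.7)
The plan is to combine the diameter bound from Lemma~\ref{lem:edges} with a standard packing argument for doubling metrics. First I would observe that by the definition of the directed graph $G$, every out-neighbor $p_j$ of $p_i$ satisfies $\lambda_j \ge \lambda_i$ (since edges are oriented from smaller to larger insertion radius). Combined with Lemma~\ref{lem:edges}, this means all neighbors of $p_i$ lie inside $\ball(p_i, \kappa \lambda_i)$.

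Next I would invoke the packing property of the greedy permutation: for any index $k$, the prefix $P_k$ is a $\lambda_k$-net, so all distinct points in $P_k$ are pairwise at distance at least $\lambda_k$. Let $N$ be the set of neighbors of $p_i$ in $G$ together with $p_i$ itself, and let $m = \max\{j : p_j \in N\}$. Since every element of $N$ has insertion radius at least $\lambda_i$, and $\lambda_i \ge \lambda_m$ would contradict $m \ge i$ only if we are careful, the cleaner way is to note directly that any two points $p_j, p_k \in N$ have $\lambda_j, \lambda_k \ge \lambda_i$, so $\dist(p_j, p_k) \ge \min(\lambda_j, \lambda_k) \ge \lambda_i$ by the packing property applied to the later of the two insertion times.

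Then I would apply the doubling dimension packing bound mentioned just after the definition of doubling dimension in Section~\ref{sec:background}: a set of points pairwise at distance at least $\epsilon$ that lies inside a ball of radius $c\epsilon$ has fewer than $(2c)^\rho$ points. Taking $\epsilon = \lambda_i$ and $c = \kappa$ yields a bound of at most $(2\kappa)^\rho = \kappa^{O(\rho)}$ neighbors, which is the desired conclusion.

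I do not expect any real obstacle here; the only slightly delicate step is justifying the pairwise distance lower bound on $N$, which follows from the fact that if $p_j$ is inserted no earlier than $p_k$ then $\dist(p_j, p_k) \ge \lambda_j \ge \lambda_i$. Everything else is a direct application of Lemma~\ref{lem:edges} and the standard doubling-dimension packing lemma.
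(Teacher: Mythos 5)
Your proposal is correct and follows essentially the same route as the paper: contain all neighbors in $\ball(p_i,\kappa\lambda_i)$ via Lemma~\ref{lem:edges}, observe they are pairwise $\lambda_i$-separated, and conclude with the standard doubling-dimension packing bound. The only difference is cosmetic --- you spell out the separation argument that the paper merely asserts, and you cite the packing exercise from the background rather than redoing the halving-cover count explicitly.
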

  \begin{proof}
    The proof uses a standard packing argument.
    By the definition of the doubling dimension, $\ball(p_i, \kappa\lambda_i)$ can be covered by $2^{\rho (\lceil\lg\kappa\rceil + 1)}$ balls of radius less than $\lambda_i$.
    Since the neighbors are pairwise $\lambda_i$-separated, there can be at most one point in each such ball.
    Therefore, the number of balls is $2^{O(\rho \lg \kappa)}=\kappa^{O(\rho)}$.
  \end{proof}

  
\subsection{How to find neighborhoods using a greedy permutation} 
  \label{sub:how_to_find_neighborhoods_using_a_greedy_permutation}
  In this section, we construct the directed graph $G$ as described from a given greedy permutation.
  In Section~\ref{sub:finding_neighborhoods_suffices}, it was shown that to construct $G$ it suffices to find points within a metric ball around each point.
  We build an efficient data structure to maintain these points.
  
  Let $P= \{p_1,\ldots, p_n\}$ be the points in $(P,\dist)$ ordered according to a greedy permutation.
  For each $p_i\in P$, let $\pred(p_i)\in P_{i-1}$ denote the nearest point to $p_i$ among the first $i-1$ points in the ordering.
  So, the insertion radius of $p_i$ is $\lambda_i = \dist(p_i,\pred(p_i))$.
  The \emph{level} of $p_i$ is defined as $\ell_i := \lceil \lg \lambda_i \rceil$.

  The goal is to process the points one at a time in the greedy ordering, and for each $p_i$, to find all preceding points within distance $\kappa\lambda_i$, where $\kappa=(\e^2+3\e+2)/\e$ and $\e>0$ is a fixed constant chosen by the user.
  Because all neighbors of $p_i$ in a sparse nerve filtration have this property by Lemma~\ref{lem:edges}, we can use this list to find all the neighbors.

  We will define a data structure $\D$ used to extract neighborhood information in the directed graph $G$.
  For each point $p_i$ in $P$, $\D$ stores $\pred(p_i)$, $\ell_i$, and three other pieces of information:
  \begin{enumerate}
    \item a point $\parent(p_i)$ called the \emph{parent},
    \item a list of points $\nbr(p_i)$ called the \emph{neighbors}, and
    \item a list of points $\ch(p_i)$ called the \emph{children} of $p_i$.
  \end{enumerate}
  These three objects change over the course of the algorithm.
  We only require that for all $i\in [n]$ and all $p_j\in P_i$, they satisfy the following invariants after $i$ points have been processed.
  \begin{enumerate}
    \item\textbf{Parent Invariant:} $\parent(p_j) = p_j$ if $\ell_j>\ell_i$. 
    Otherwise, $\parent(p_j)$ is a point $p_k$ such that $\ell_k>\ell_i$ and $\dist(p_j, p_k)\le 2^{\ell_i}$.
    \item\textbf{Child Invariant:} $\ch(p_j) \supseteq \{p_j\} \cup \{p_k\in P_i \mid \parent(p_k) = p_j \text{ and } \ell_k = \ell_i\}$.        
    \item\textbf{Neighbor Invariant:} $\nbr(p_j) \supseteq \{p_k\in P_i \mid \dist(p_j, p_k)\le \kappa2^{\min\{\ell_j,\ell_k, \ell_i+1\}}\}$.
  \end{enumerate}
  The second invariant states that the children list of $p_j$ contains all points at the same level as $p_i$ that have $p_j$ as a parent.
  The third invariant says that the neighbor lists contain all nearby points where ``nearby'' is related to the insertion radius of $p_i$.
  This last invariant implies the correctness of the algorithm, because for $j=i$, it says the neighbor list contains the set we are interested in.
  We maintain the lists for the other points to help us do updates at each step.

  Furthermore, we assume that $\D$ provides constant-time access to the list of points in a specific level.

  Algorithm~\ref{alg:ds_for_edges} shows how a new point $p_i$ can be inserted into the data structure $\D$. 
  In fact, we process points of a greedy permutation one by one and after inserting a new point in $\D$, we update the directed graph $G$, which is used to extract higher dimensional simplices.
  \begin{algorithm}
    \caption{Inserting a new point into the data structure $\D$}
    \label{alg:ds_for_edges}
    \begin{algorithmic}[1]
      \Procedure{Insert}{$\D,p_i$}
        \If{$\ell_i< \ell_{i-1}$}
          \ForAll{$p_k$ such that $\ell_k= \ell_{i-1}$}
            \State $\parent(p_k) \gets p_k$
          \EndFor
        \EndIf  
        \State $p_j\gets \pred(p_i)$
        \State $\parent(p_i) \gets \parent(p_j)$
        \ForAll{$p_k\in \nbr(\parent(p_j))$}
          \If{$\dist(p_i,p_k)\le \dist(p_i,\parent(p_i))$ and $\ell_k> \ell_i$}
            \State $\parent(p_i)\gets p_k$
          \EndIf
        \EndFor
        \State add $p_i$ to $\ch(p_i)$
        \State add $p_i$ to $\ch(\parent(p_i))$
        \State add $p_i$ to $\nbr(p_i)$
        \ForAll{$p_k\in \ch(\nbr(\parent(p_i)))$}
          \If{$\dist(p_i,p_k)\le\kappa 2^{\ell_i}$}
            \State add $p_k$ to $\nbr(p_i)$
            \State add $p_i$ to $\nbr(p_k)$
          \EndIf
        \EndFor
      \EndProcedure
    \end{algorithmic}
  \end{algorithm}
  
  \begin{lemma}
    Let $P = (p_1,\ldots, p_n)$ be a greedy permutation.
    For all $i\in \{2,\ldots,n\}$, if $\D$ is a data structure on $P_{i-1}$ satisfying the three invariants, then it also satisfies the invariants after calling \textsc{Insert}$(\D,p_i)$.
  \end{lemma}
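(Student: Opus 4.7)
My plan is to verify each of the three invariants in turn, using the invariants on $P_{i-1}$ as the hypothesis. A key structural observation is that insertion radii in a greedy permutation are non-increasing, so every $p_j \in P_{i-1}$ satisfies $\ell_j \ge \ell_{i-1} \ge \ell_i$. Hence the first \texttt{if}-block promotes precisely the points whose level was $\ell_{i-1}$ to self-parented in the case $\ell_i < \ell_{i-1}$, which is exactly what the Parent Invariant demands for those points under the new threshold $\ell_i$. The only remaining case for the Parent Invariant is $p_j = p_i$ itself.

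For $p_i$, the algorithm initializes $\parent(p_i) \gets \parent(\pred(p_i))$; the old invariant applied to $\pred(p_i)$ (after the promotion step) ensures this initial parent already has level $> \ell_i$. The scan over $\nbr(\parent(\pred(p_i)))$ then searches for a closer candidate. The crucial intermediate claim is that every valid parent candidate at level $>\ell_i$ within $2^{\ell_i}$ of $p_i$ lies in this neighbor list; I plan to prove this by two triangle-inequality hops from the candidate through $p_i$ and $\pred(p_i)$ to $\parent(\pred(p_i))$, bounding the resulting distance by a constant times $2^{\ell_i}$ that is absorbed by $\kappa 2^{\ell_{i-1}+1}$ in the old Neighbor Invariant. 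The Child Invariant is then the easiest of the three: only $\ch(p_i)$ and $\ch(\parent(p_i))$ can change (because when $\ell_i<\ell_{i-1}$ no point of $P_{i-1}$ has level $\ell_i$, and when $\ell_i=\ell_{i-1}$ the old child lists already contain everything except possibly $p_i$), and both updates are performed explicitly by the algorithm.

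The Neighbor Invariant is where the main obstacle lies. Since $\ell_i\le\ell_{i-1}$, the required neighbor set for each $p_j\in P_{i-1}$ is contained in the old one except possibly for the new point $p_i$; for $p_j=p_i$ the required set is exactly $\{p_k\in P_i:\dist(p_i,p_k)\le\kappa 2^{\ell_i}\}$. The final loop over $\ch(\nbr(\parent(p_i)))$ tests all candidates in this search list and inserts the qualifying ones symmetrically. The hard part is to show this search list is \emph{exhaustive}, namely that every $p_k\in P_{i-1}$ within $\kappa 2^{\ell_i}$ of $p_i$ is a child of some neighbor of $\parent(p_i)$. I plan to prove this by passing through $\parent(p_k)$: the updated Parent Invariant gives $\dist(p_k,\parent(p_k))\le 2^{\ell_i}$, and a triangle inequality then bounds $\dist(\parent(p_i),\parent(p_k))$ by roughly $(\kappa+2)2^{\ell_i}$, which the old Neighbor Invariant's threshold $\kappa 2^{\ell_{i-1}+1}$ accommodates provided $\kappa \ge 2$, a condition clearly satisfied by $\kappa=(\e^2+3\e+2)/\e$. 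The old Child Invariant applied to $\parent(p_k)$ then places $p_k\in\ch(\parent(p_k))$, closing the chain.
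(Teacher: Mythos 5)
Your plan follows the same route as the paper's proof: the promotion step handles the Parent Invariant for $P_{i-1}$, a triangle inequality shows any valid parent candidate for $p_i$ lies in $\nbr(\parent(\pred(p_i)))$, the Child Invariant only requires inserting $p_i$ into $\ch(\parent(p_i))$, and the Neighbor Invariant is closed by the chain through $\parent(p_k)$ giving $\dist(\parent(p_k),\parent(p_i))\le(1+\kappa/2)2^{\ell_i+1}<\kappa 2^{\ell_i+1}$. All of these steps match the paper, and your explicit remark that the last inequality needs $\kappa>2$ is a point the paper leaves implicit.

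There is, however, one genuine omission in the Parent Invariant argument. You show that \emph{every} valid candidate (a point at level $>\ell_i$ within $2^{\ell_i}$ of $p_i$) appears in the scanned list $\nbr(\parent(\pred(p_i)))$, but you never show that such a candidate \emph{exists}. Existence is not automatic: the initialization $\parent(p_i)\gets\parent(\pred(p_i))$ only guarantees a parent at level $>\ell_i$ within $\dist(p_i,\pred(p_i))+2^{\ell_i}\le\lambda_i+2^{\ell_i}\le 2^{\ell_i+1}$ of $p_i$, which does not meet the $2^{\ell_i}$ bound demanded by the invariant, so if no better candidate existed the invariant would simply fail. The paper supplies the missing step from the covering property of the greedy permutation: taking $z$ to be the largest index with $\ell_z>\ell_i$, every point of $P$ is within $\lambda_{z+1}\le 2^{\ell_{z+1}}\le 2^{\ell_i}$ of $P_z$, and every point of $P_z$ has level $>\ell_i$; hence the nearest point of $P_z$ to $p_i$ is a valid candidate. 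You need this (or an equivalent appeal to the net/covering property) to complete the Parent Invariant for $p_i$; the rest of your argument then goes through as in the paper.
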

  \begin{proof}
    We consider the invariants one at a time.
    
    First, if $\ell_i< \ell_{i-1}$, the algorithm updates the parents of all nodes in level $\ell_{i-1}$.
    Note that these are the only points required to be updated to satisfy the Parent Invariant for all points in $P_{i-1}$.

    Next, we check that there exists a point $p_k$ such that setting $\parent(p_i)$ to $p_k$ satisfies the Parent Invariant.
    The algorithm iterates over $\nbr(\parent(p_j))$ to find the closest point with a level higher than $\ell_i$.
    We first show there exists a point in a higher level that satisfies the Parent Invariant and then show that any such point is in $\nbr(\parent(p_j))$.
    Let $z=\argmax_{z<i}\{\ell_z\mid \ell_z>\ell_i\}$.
    Let $p_k$ be the closest point in $P_z$ to $p_i$.
    So, 
    \begin{align*}
      \dist(p_i,p_k)=d(p_i,P_z)\le \max_{p\in P}\dist(p,P_z)=\lambda_{z+1}\le 2^{\ell_{z+1}}\le 2^{\ell_i}.
    \end{align*}
    Thus, some point $p_k$ could satisfy the Parent Invariant. 
    Any such point $p_k$ satisfies   
    \begin{align*}
      \dist(p_k, \parent(p_j))
      &\le \dist(p_k, p_i) + \dist(p_i, p_j) + \dist(p_j, \parent(p_j))\\
      &\le 2^{\ell_i} + \lambda_i + 2^{\ell_i}\\
      &< 2^{\ell_i} + 2^{\ell_i} + 2^{\ell_i}\\
      &=\frac{3}{2}\cdot2^{\ell_{i}+1}\\
      &<\kappa2^{\ell_i+1}.
    \end{align*}
    Therefore, $p_k\in \nbr(\parent(p_j))$ by the Neighbor Invariant.

    For the Child Invariant, $p_i$ needs to be inserted into $\ch(\parent(p_i))$.
    No other children lists need to change to satisfy the invariant.
    
    Next, to satisfy the Neighbor Invariant, neighbor lists should be updated.
    This only involves finding the neighbor list of $p_i$ and also adding $p_i$ to the neighbor lists of its neighbors.
    For this step, it suffices to check that if $p_k$ must be added to $\nbr(p_i)$, i.e.\ if $\dist(p_i,p_k)\le \kappa2^{\ell_i}$, then $p_k\in \ch(\nbr(\parent(p_i)))$.
    That is, the neighbors of $p_i$ are all children of neighbors of the parent of $p_i$.
    This follows from the triangle inequality and the invariants for $i-1$ as follows.
    \begin{align*}
      \dist(\parent(p_k), \parent(p_i))
      &\le \dist(\parent(p_k), p_k) + \dist(p_k, p_i) + \dist(p_i, \parent(p_i))\\
      &\le 2^{\ell_i} + \kappa2^{\ell_i} + 2^{\ell_i}\\
      &= (1+\kappa/2)2^{\ell_i+1}\\
      &< \kappa2^{\ell_i+1}.
    \end{align*}
    So, it follows that $\parent(p_k)\in \nbr(\parent(p_i))$, and so $p_k\in \ch(\parent(p_k))\subseteq \ch(\nbr(\parent(p_i)))$.
    If $p_k$ is added to $\nbr(p_i)$, then it is required to add $p_i$ to $\nbr(p_k)$ and the algorithm does this.
  \end{proof}
  
  Algorithm~\ref{alg:directed_graph} constructs all edges that appear in a sparse filtration. 
  It receives a set of points $P$, which is ordered by a greedy permutation, as input and returns a directed graph $G$.
  As we mentioned earlier, we will use the directed graph $G$ to find higher dimensional simplices.
  For each point $p_i$, the algorithm invokes the \textsc{Insert} procedure to find its neighbors.
  Then, to build sparse edges between $p_i$ and its neighbors, Algorithm~\ref{alg:checkedge} is called.
  If an edge appears in the sparse filtration, \textsc{EdgeBirthTime} method returns the birth time of the edge and $\infty$ otherwise.
  Finally, for an edge in the sparse filtration, a directed edge from $p_i$ to $p_j$ will be inserted into $G$.

  \begin{algorithm}
    \caption{Constructing edges of a sparse filtration}
    \label{alg:directed_graph}
    \begin{algorithmic}[1]
      \Procedure{ConstructEdges}{$P=\{p_1,\ldots,p_n\}$}
        \State initialize $\D$ with $p_1$ \Comment{adds $p_1$ to $\ch(p_1)$ and sets $\parent(p_1)=p_1$.}
        \State initialize a directed graph $G$ on $P$
        \For{$i=2$ to $n$}
          \State \textsc{Insert($\D,p_i$)}
          \ForAll{$p_j\in \nbr(p_i)$}
            \State $\alpha\gets$\textsc{EdgeBirthTime($p_i,p_j$)}
            \If{$\alpha<\infty$}
              \State add a directed edge from $p_i$ to $p_j$ with birth time $\alpha$ to $G$
            \EndIf
          \EndFor
        \EndFor
        \Return $G$
      \EndProcedure
    \end{algorithmic}
  \end{algorithm}
  
  \begin{algorithm}
    \caption{Compute the birth time of an edge}
    \label{alg:checkedge}
    \begin{algorithmic}[1]
      \Procedure{EdgeBirthTime}{$p_i,p_j$}
        \If{$\lambda_i>\lambda_j$}
          \State swap $p_i$ and $p_j$
        \EndIf
        \If{$\dist(p_i,p_j)\le\frac{2\lambda_i(1+\e)}{\e}$}
          \State \textbf{return} $\frac{\dist(p_i,p_j)}{2}$
        \EndIf
        \If{$\dist(p_i,p_j)\le\frac{(\lambda_i+\lambda_j)(1+\e)}{\e}$}
          \State \textbf{return} $\dist(p_i,p_j)-\frac{\lambda_i(1+\e)}{\e}$
        \EndIf
        \State \textbf{return} $\infty$
      \EndProcedure
    \end{algorithmic}
  \end{algorithm}
  
  \begin{theorem}
    Given a greedy permutation of a finite metric $(P,\dist)$ of constant doubling dimensions and the nearest predecessors $\pred(p)$ for each $p\in P$, one can compute the edges of the sparse nerve filtration of $(P,\dist)$ in $O(n)$ time.
  \end{theorem}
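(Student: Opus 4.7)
The plan is to bound the running time of \textsc{ConstructEdges} by analyzing the cost per inserted point. The outer loop of Algorithm~\ref{alg:directed_graph} runs $n$ times and for each $p_i$ performs one call to \textsc{Insert}, then iterates over $\nbr(p_i)$ calling \textsc{EdgeBirthTime} (which is clearly $O(1)$, being a handful of arithmetic operations). So the whole problem reduces to showing that (i) \textsc{Insert} takes amortized $O(1)$ time per call, and (ii) $|\nbr(p_i)| = O(1)$ at the moment it is read. Part (ii) follows from Lemma~\ref{lem:constant_size_neighborhoods} together with the Neighbor Invariant for $j=i$: every point listed is within $\kappa 2^{\ell_i}$ of $p_i$, has level $\ge \ell_i$, and so by the packing property any two are pairwise $\Omega(2^{\ell_i})$-separated, bounding their count by $\kappa^{O(\rho)} = O(1)$.

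For \textsc{Insert}, I would dissect the procedure into four blocks and bound each. The parent-reset block (when $\ell_i < \ell_{i-1}$) is the one that needs amortization rather than a worst-case bound: I would argue that each point $p_k \in P$ has its parent reassigned to itself at most once during the entire run, namely at the unique moment when the current level first drops strictly below $\ell_k$; hence the total work across all resets is $O(n)$, i.e.\ amortized $O(1)$ per insertion. The block that searches for a new parent iterates over $\nbr(\parent(p_j))$, whose size I bound by another packing argument: $\parent(p_j)$ sits at a level higher than $\ell_i$ and its neighbor list (by the Neighbor Invariant at the end of step $i-1$) consists of points within $\kappa 2^{\ell_{i-1}+1}$ that are pairwise $\Omega(2^{\ell_{i-1}})$-separated, giving $O(1)$ entries. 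The final block iterates over $\ch(\nbr(\parent(p_i)))$; since $|\nbr(\parent(p_i))| = O(1)$ by the same argument, it suffices to show each children list has $O(1)$ size, and this again follows from the Child Invariant (children sit at a single level $\ell_i$ and are $2^{\ell_i - 1}$-separated) together with doubling.

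The main obstacle is the last bound: verifying that $|\ch(\nbr(\parent(p_i)))| = O(1)$ requires a triple of packing arguments that need to be stacked carefully, namely bounding the number of neighbors, then the number of children per neighbor, and checking that the constants multiply out (and are independent of $n$) under the assumption that $\rho$ is constant. Once this is in place, summing $O(1)$ amortized work across $n$ insertions, plus $O(1)$ per neighbor in the edge-construction loop (of which there are $O(1)$ per point), yields the claimed $O(n)$ total running time.
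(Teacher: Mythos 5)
Your decomposition is the same as the paper's: charge the parent-reset loop globally (each point is reset to be its own parent at most once, giving $O(n)$ total), bound the enumerated lists by packing, and note that \textsc{EdgeBirthTime} is $O(1)$ and is called $\kappa^{O(\rho)}$ times per point by Lemma~\ref{lem:constant_size_neighborhoods}. However, there is one genuine gap in how you bound the list sizes. The Parent, Child, and Neighbor Invariants are stated with $\supseteq$: they only say which points \emph{must} appear in $\nbr(p_j)$ and $\ch(p_j)$, not that these are the only points stored. Your packing arguments (``the neighbor list consists of points within $\kappa 2^{\ell_{i-1}+1}$ \ldots'', ``children sit at a single level $\ell_i$'') bound only the required sublists. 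The actual stored lists accumulate entries over the run: a point added to $\nbr(p_m)$ at an earlier step $k$ passed a distance test at the coarser scale $\kappa 2^{\ell_k}$ with $\ell_k \ge \ell_i$, so it may be far outside the ball of radius $\kappa 2^{\ell_i+1}$ that your packing argument uses, and similarly $\ch(p_j)$ retains children inserted at higher levels. So the worst-case size of $\nbr(\parent(p_j))$ and of $\ch(\nbr(\parent(p_i)))$ at step $i$ is not $O(1)$ by packing alone.

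The paper closes this with lazy deletion plus amortization: whenever a list is enumerated, each visited entry is checked in $O(1)$ time against the current invariant and removed if no longer required; since each inserted point is placed into only $O(1)$ lists over the whole execution, total insertions are $O(n)$, the number of removals is at most the number of insertions, and the cost of visiting a stale entry is charged to its insertion. With that device, each enumeration costs amortized $O(1)$ (the constant-size required sublist plus charged stale entries), and the rest of your argument goes through unchanged. You should either add this pruning-and-charging step or prove an eager-maintenance variant that keeps the lists exactly equal to the required sets; as written, the claim that the enumerated lists have size $O(1)$ does not follow from the invariants.
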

    \begin{proof}
    Algorithm~\ref{alg:directed_graph} finds all edges in a sparse filtration.
    The running time of this algorithm mainly depends on the running time of \textsc{Insert} procedure and the size of neighbor list for each point.
    
    In Algorithm~\ref{alg:ds_for_edges}, the most common operation for the lists $\nbr(p_j)$ and $\ch(p_j)$ is to enumerate their elements.
    Any time a list is enumerated, we can check each point in constant time to see if it is still required to satisfy the invariant and remove it otherwise.
    Note that although the invariants only specify a subset that must appear, it is easy to check that enumerating these lists can be done in amortized constant time.
    This follows from two facts.
    First, the required subsets have constant size (by standard packing arguments).
    Second, the number of removals is at most the number of insertions, so we charge the cost of visiting such a point in the enumeration to the cost of its insertion. 
    
    In addition, when inserting $p_i$, if $\ell_i< \ell_{i-1}$, then $parent(p_k)$ is updated for all $p_k$ such that $\ell_k = \ell_{i-1}$.
    The total cost of such operations is $O(n)$ as no parent is updated twice.
    
    After insertion of a point $p_i$ into $\D$, Algorithm~\ref{alg:checkedge} is called for all points in $\nbr(p_i)$ to check whether an edge belongs to the sparse filtration.
    This algorithm has a constant running time.
    In addition, by Lemma~\ref{lem:constant_size_neighborhoods}, the size of a neighbor list for each point is constant.
    Therefore, for each point, the cost of finding these edges in the sparse filtration in $O(1)$.
  \end{proof}


\subsection{Higher Dimensional Simplices} 
\label{sub:higher_dimensional_simplices}
  In the previous section, it is shown that from a greedy permutation, the edges of a sparse nerve filtration can be constructed in linear time.
  Now, we present an algorithm to find $k$-simplices in the sparse filtration for $k>1$.
  As mentioned earlier, the directed graph $G$ built from the edges of the sparse nerve filtration will be used to construct higher dimensional simplices.
  
  Let $E(v)$ be the vertices adjacent to a vertex $v$ in $G$ (for each $u\in E(v)$, there is a directed edge from $v$ to $u$).
  To find a $k$-simplex for $k>1$ containing a vertex $v$, we consider all subsets $\{u_1,\ldots,u_k\}$ of $k$ vertices in $E(v)$.
  If $\{v,u_1,\ldots,u_k\}$ forms a $(k+1)$-clique, we check the clique to see whether it creates a $k$-simplex and compute its birth time.
  The birth time of a $k$-simplex $\sigma$ in a nerve filtration is defined as follows.
  \begin{align*}
    \textsc{SimplexBirthTime}(\sigma) &:= \min\left\{\alpha:\bigcap_{j\in\sigma}U_j^\alpha \ne \emptyset\right\}
    &= \min\left\{\alpha:\bigcap_{j\in\sigma}b_j(\alpha)\ne \emptyset\right\}.
  \end{align*}
  If no such $\alpha$ exists, then we define the birth time to be $\infty$.
  We assume the user provides a method, \textsc{SimplexBirthTime}, to compute birth times for their metric that runs in time polynomial in $k$.
  This function takes a $(k+1)$-clique as input.
  If at some scale $\alpha$, the corresponding balls have a common intersection, it returns the minimum such $\alpha$, otherwise, it returns $\infty$ indicating the $(k+1)$-clique is not a $k$-simplex in the sparse filtration.

  For the case of Rips filtrations (i.e.\ $\ell_\infty$), \textsc{SimplexBirthTime}$(\sigma)$ just needs to compute the maximum birth time of the edges and compare it to $\min_{p_i\in \sigma}\lambda_i(1+\e)^2/\e$ (the first time $t$ after which some $p_i\in\sigma$ has $b_i(t) = \emptyset$).
  For $\ell_2$, the corresponding computation is a variation of the minimum enclosing ball problem.

  Algorithm~\ref{alg:rips} finds the $k$-simplices and birth times in a sparse filtration.
  In this algorithm, $G$ is the given directed graph and the output $S$ is the set of pairs $(\sigma, t)$, where $\sigma$ is a $k$-simplex and $t$ is its birth time.
  \begin{algorithm}
    \caption{Find all $k$-simplices and birth times}
    \label{alg:rips}
    \begin{algorithmic}[1]
    \Procedure{FindSimplices}{$G,k$}
    \State $S\gets \emptyset$
    \ForAll{vertex $v$ in $G$}
    \ForAll{$\{u_1,\ldots,u_k\}\subseteq E(v)$}
    \If{$\{v,u_1,\ldots,u_k\}$ is a $(k+1)$-clique}
    \State $\sigma \gets \{v,u_1,\ldots,u_k\}$
    \State $t\gets $\textsc{SimplexBirthTime}$(\sigma)$
    \If{$t< \infty$}
    \State $S\gets S\cup (\sigma, t)$
    \EndIf
    \EndIf
    \EndFor
    \EndFor
    \State \textbf{return} $S$
    \EndProcedure
    \end{algorithmic}
  \end{algorithm}
  
  \begin{theorem}
    Given the edges of a sparse nerve filtration, Algorithm~\ref{alg:rips} finds the $k$-simplices of $\{S^\alpha\}$ in $\kappa^{O(k\rho)}n$ time, where $\rho$ is the doubling dimension of the input metric, $\kappa=(\e^2+3\e+2)/\e$, and $\e>0$.
  \end{theorem}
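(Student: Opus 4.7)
The plan is to bound the running time of Algorithm~\ref{alg:rips} by a direct counting argument leveraging the constant-out-degree property established in Lemma~\ref{lem:constant_size_neighborhoods}. Since the outer loop runs over the $n$ vertices of $G$, it suffices to bound the per-vertex work and multiply by $n$.

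First, I would observe that $|E(v)| = \kappa^{O(\rho)}$ for every vertex $v$, because $E(v)$ consists of the out-neighbors of $v$ in $G$, which by definition have insertion radius at least that of $v$, and Lemma~\ref{lem:edges} together with Lemma~\ref{lem:constant_size_neighborhoods} bound the number of such points by $\kappa^{O(\rho)}$. Consequently, the inner loop enumerates at most
\[
  \binom{|E(v)|}{k} \;\le\; |E(v)|^k \;=\; \kappa^{O(k\rho)}
\]
candidate subsets of size $k$.

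Next, I would account for the work done inside the inner loop on each candidate $\{u_1,\ldots,u_k\}$. Checking whether $\{v,u_1,\ldots,u_k\}$ forms a $(k+1)$-clique requires verifying that each of the $\binom{k}{2}$ pairs among $\{u_1,\ldots,u_k\}$ is an edge of $G$; assuming adjacency queries in $G$ take constant time (for instance, by storing each $E(v)$ as a hash set), this costs $O(k^2)$. Computing \textsc{SimplexBirthTime}$(\sigma)$ costs $\poly(k)$ by assumption on the user-supplied routine. Thus, the per-candidate cost is $\poly(k)$, and the per-vertex cost is $\kappa^{O(k\rho)} \cdot \poly(k)$. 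Because $\kappa > 1$ is a constant (in fact $\kappa = \e + 3 + 2/\e \ge 3 + 2\sqrt{2}$ by AM--GM), the polynomial factor in $k$ is absorbed into $\kappa^{O(k\rho)}$, yielding a total running time of $\kappa^{O(k\rho)} n$ as claimed.

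The main obstacle, such as it is, is mostly a bookkeeping check rather than a genuine technical step: one must confirm that the iteration over out-neighborhoods of $G$ actually reaches every $k$-simplex of $\{S^\alpha\}$ at least once. This holds because, under the convention that edges of $G$ are directed from smaller to larger insertion radius, every simplex $\sigma$ has a unique vertex $v \in \sigma$ of smallest insertion radius, and all other vertices of $\sigma$ must lie in $E(v)$; hence $\sigma$ is enumerated exactly once, namely during the iteration for $v$. With this observation in place, the running-time bound is purely a packing calculation plus the absorption of the $\poly(k)$ factor, and no further analytic obstacles arise.
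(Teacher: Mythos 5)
Your proposal is correct and follows essentially the same route as the paper's proof: bound $|E(v)|$ by $\kappa^{O(\rho)}$ via Lemma~\ref{lem:constant_size_neighborhoods}, count the $\binom{|E(v)|}{k}$ candidate subsets per vertex, and multiply by $n$. The extra details you supply (absorbing the $\poly(k)$ per-candidate cost into the exponent, and checking that every simplex is reached via its minimum-insertion-radius vertex) are reasonable refinements that the paper leaves implicit, but they do not change the argument.
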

  \begin{proof}
    In Algorithm~\ref{alg:rips}, for every vertex $v$ in the directed graph $G$, there are ${|E(v)| \choose k}$ subsets with size $k$.
    In addition, by Lemma~\ref{lem:constant_size_neighborhoods}, $|E(p_i)| = \kappa^{O(\rho)}$.
    Therefore, the total running time of this algorithm will be $\kappa^{O(k\rho)}n$.
  \end{proof}
  

  \section{Removing Vertices} 
\label{sec:removing_vertices}

  Because the sparse filtration is a true filtration, no vertices are removed.
  When the cone is truncated, no new simplices will be added using that vertex, but it is still technically part of the filtration.
  The linear-size guarantee is a bound on the total number of simplices in the complex.
  Thus, by using methods such as zig-zag persistence or simplicial map persistence to fully remove these vertices when they are no longer needed cannot improve the asymptotic performance.
  Still, it may be practical to remove them (see~\cite{botnan15approximating}).
  A full theoretical or experimental analysis of the cost tradeoff of using a heavier algorithm to do vertex removal is beyond the scope of this paper.

  In this section, we show that the geometric construction leads to a natural choice of elementary simplicial maps (edge collapses) which all satisfy the so-called link condition.
  In the persistence by simplicial maps work of Dey et al.~\cite{dey14computing} and Boissonat et al.~\cite{boissonat13compressed}, a key step in updating the data structures to contract an edge is to first add simplices so that the so-called Link Condition is satisfied.
  The \emph{link} of a simplex $\sigma$ in a complex $K$ is defined as \appdx{\appendixMode}
  {\[                                                                                       
      \link\sigma = \{\tau \setminus \sigma \mid \tau\in K \text{ and } \sigma\subseteq\tau\}.
  \]}
  {$\link\sigma = \{\tau \setminus \sigma \mid \tau\in K \text{ and } \sigma\subseteq\tau\}$.}                                                                                          
  That is, the link $\sigma$ is formed by removing the vertices of $\sigma$ from each of its cofaces.
  An edge $\{u,v\}\in K$ satisfies the \emph{Link Condition} if and only if
  \[
    \link\{u,v\} = \link\{u\} \cap \link\{v\}.
  \]
  Dey et al.~\cite{dey99topology} proved that edge contractions induce homotopy equivalences when the link condition is satisfied.
  Thus, it gives a minimal local condition to guarantee that the contraction preserves the topology.
  More recently, it was shown that such a contraction does not change the persistent homology~\cite{dey14computing}.

  \begin{proposition}
    If $(P,\dist)$ is a finite subset of a convex metric space and $\{S^\alpha\}$ is its corresponding sparse filtration, then the last vertex $p_n$ has a neighbor $p_i$ such that the edge $\{p_n,p_i\}\in S^{\alpha}$ satisfies the link condition, where $\alpha = \lambda_n(1+\e)^2/\e$ and $\lambda_n$ is the insertion radius of $p_n$.
  \end{proposition}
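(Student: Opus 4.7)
The plan is to choose $p_i$ by applying the Covering Lemma to $p_n$, show $b_n(\alpha) \subseteq b_i(\alpha)$ via a short triangle inequality, and then use the fact that $p_n$ is last in the greedy order to collapse the two-parameter structure of $S^\alpha$ into a single-slice nerve, reducing the Link Condition to an elementary ball inclusion. Concretely, I would apply Lemma~\ref{lem:covering} with $p_j := p_n$ and $\beta := \alpha = \lambda_n(1+\e)^2/\e$; the construction inside its proof produces $p_i$ with $\dist(p_i, p_n) \le \e\beta/(1+\e) = \lambda_n(1+\e)$ and $\lambda_i \ge \lambda_n(1+\e)$. In particular $\lambda_i > \lambda_n$, so $p_i \ne p_n$, and $\alpha \le \lambda_i(1+\e)/\e$ forces $r_i(\alpha) = \alpha$ and $b_i(\alpha) = \ball(p_i, \alpha)$, while $b_n(\alpha) = \ball(p_n, \lambda_n(1+\e)/\e)$. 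The triangle inequality then gives, for any $x \in b_n(\alpha)$,
\[
\dist(x, p_i) \le \dist(x, p_n) + \dist(p_n, p_i) \le \frac{\lambda_n(1+\e)}{\e} + \lambda_n(1+\e) = \frac{\lambda_n(1+\e)^2}{\e} = \alpha,
\]
so $b_n(\alpha) \subseteq b_i(\alpha)$, and in particular $\{p_n, p_i\} \in S^\alpha$.

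Next I would use that $p_n$ is the last inserted point, so every other $p_j$ satisfies $\lambda_j \ge \lambda_n$; hence on $\delta \in [0, \alpha]$ every $b_j(\delta)$ is a nonempty closed ball whose radius $r_j(\delta)$ is nondecreasing in $\delta$. Consequently $\delta \mapsto \bigcap_{j \in \sigma} b_j(\delta)$ is monotone on $[0, \alpha]$ for every $\sigma \subseteq [n]$, and so $\sigma \in S^\alpha$ iff $\bigcap_{j \in \sigma} b_j(\alpha) \ne \emptyset$; the two-parameter nerve $\nerve\{U_j^\alpha\}$ coincides at this scale with the single-slice nerve of $\{b_j(\alpha)\}_{j \in [n]}$. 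With this reduction, $\link\{p_n, p_i\} \subseteq \link\{p_n\} \cap \link\{p_i\}$ is automatic, and for the reverse inclusion any $\tau$ with $\tau \cup \{p_n\}, \tau \cup \{p_i\} \in S^\alpha$ admits an $x \in \bigcap_{j \in \tau} b_j(\alpha) \cap b_n(\alpha)$, which by the ball inclusion above also lies in $b_i(\alpha)$, so $\tau \cup \{p_n, p_i\} \in S^\alpha$ and $\tau \in \link\{p_n, p_i\}$.

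The main conceptual hurdle is the reduction in the second paragraph: although $S^\alpha$ is defined as a union of nerves over all $\delta \le \alpha$ (equivalently, the nerve of the cones $U_j^\alpha$), at the precise ``death scale'' $\alpha = \lambda_n(1+\e)^2/\e$ of the cone over $p_n$ the monotone growth of every ball lets us read the Link Condition off a single slice. After that observation, the remainder is a pair of triangle-inequality estimates that closely mirror the proof of the Covering Lemma itself.
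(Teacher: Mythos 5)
Your proof is correct and follows essentially the same route as the paper: obtain $p_i$ with $b_n(\alpha)\subseteq b_i(\alpha)$ from the Covering Lemma, use the fact that $\lambda_n$ is the minimal insertion radius to reduce membership in $S^\alpha$ to nonemptiness of $\bigcap_{j\in\sigma} b_j(\alpha)$ at the single scale $\alpha$, and then verify the nontrivial inclusion $\link\{p_i\}\cap\link\{p_n\}\subseteq\link\{p_i,p_n\}$ directly. You are in fact slightly more careful than the paper in unpacking the Covering Lemma's construction to guarantee $\lambda_i \ge \lambda_n(1+\e)$, hence $p_i\neq p_n$, so that $\{p_n,p_i\}$ is genuinely an edge.
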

  \begin{proof}
    It follows directly from the definition of a link that $\link\{u,v\}\subseteq \link\{u\}\cap\link\{v\}$ for all edges $\{u,v\}$.
    By the Covering Lemma (Lemma~\ref{lem:covering}), we know that there exists a $p_i\in P$ such that $b_n(\alpha)\subseteq b_i(\alpha)$.
    Thus, it suffices to check that $\link\{i\}\cap\link\{n\} \subseteq \link\{i,n\}$.
    Because the vertices are ordered according to a greedy permutation, $\lambda_n \ge \lambda_j$ for all $p_j\in P$.
    It follows that a simplex $J\in S^\alpha$ if and only if $\bigcap_{i\in J} b_j(\alpha)\neq \emptyset$.

    Let $J$ be any simplex in $\link\{i\}\cap\link\{n\}$.
    So, $i,n\notin J$ and $\bigcap_{j\in J \cup \{n\}}b_j(\alpha)\neq \emptyset$.
    Because $b_n(\alpha)\cap b_i(\alpha) = b_n(\alpha)$, it follows that $\bigcap_{j\in J \cup \{i,n\}}b_j(\alpha)\neq \emptyset$.
    Thus, we have $J\in \link\{i,n\}$ as desired.
  \end{proof}

  \section{Conclusion} 
\label{sec:conclusion}

In this paper, we gave a new geometric perspective on sparse filtrations for topological data analysis that leads to a simple proof of correctness for all convex metrics.
By considering a nerve construction one dimension higher, the proofs are primarily geometric and do not require explicit construction of simplicial maps.
This geometric view clarifies the non-zig-zag construction, while also showing that removing vertices can be accomplished with simple edge contractions.

 {
 \small
  \bibliographystyle{abbrv}
  \bibliography{bibliography}

\begin{thebibliography}{10}

\bibitem{boissonat13compressed}
J.-D. Boissonnat, T.~K. Dey, and C.~Maria.
\newblock The compressed annotation matrix: An efficient data structure for
  computing persistent cohomology.
\newblock In {\em Algorithms -- ESA 2013}, pages 695--706, 2013.

\bibitem{botnan15approximating}
M.~B. Botnan and G.~Spreemann.
\newblock Approximating persistent homology in {E}uclidean space through
  collapses.
\newblock {\em Applicable Algebra in Engineering, Communication and Computing},
  pages 1--29, 2015.

\bibitem{buchet15efficient}
M.~Buchet, F.~Chazal, S.~Y. Oudot, and D.~R. Sheehy.
\newblock Efficient and robust persistent homology for measures.
\newblock In {\em ACM-SIAM Symposium on Discrete Algorithms}, pages 168--180,
  2015.

\bibitem{carlsson10zigzag}
G.~Carlsson and V.~de~Silva.
\newblock Zigzag persistence.
\newblock {\em Foundations of Computational Mathematics}, 10(4):367--405, 2010.

\bibitem{carlsson09zigzag}
G.~Carlsson, V.~de~Silva, and D.~Morozov.
\newblock Zigzag persistent homology and real-valued functions.
\newblock In {\em Proceedings of the 25th {ACM} Symposium on Computational
  Geometry}, pages 247--256, 2009.

\bibitem{cavanna15geometric}
N.~J. Cavanna, M.~Jahanseir, and D.~R. Sheehy.
\newblock A geometric perspective on sparse filtrations.
\newblock In {\em Proceedings of the 2015 Canadian Conference on Computational
  Geometry}, 2015.

\bibitem{cavanna15visualizing}
N.~J. Cavanna, M.~Jahanseir, and D.~R. Sheehy.
\newblock Visualizing sparse filtrations.
\newblock In {\em Proceedings of the 31st Symposium on Computational Geometry
  (Multimedia Session)}, 2015.

\bibitem{chazal09proximity}
F.~Chazal, D.~Cohen-Steiner, M.~Glisse, L.~J. Guibas, and S.~Y. Oudot.
\newblock Proximity of persistence modules and their diagrams.
\newblock In {\em Proceedings of the 25th {ACM} Symposium on Computational
  Geometry}, pages 237--246, 2009.

\bibitem{chazal08towards}
F.~Chazal and S.~Y. Oudot.
\newblock Towards persistence-based reconstruction in {E}uclidean spaces.
\newblock In {\em Proceedings of the 24th {ACM} Symposium on Computational
  Geometry}, pages 232--241, 2008.

\bibitem{dey99topology}
T.~K. Dey, H.~Edelsbrunner, S.~Guha, and D.~V. Nekhayev.
\newblock Topology preserving edge contraction.
\newblock {\em Publications de l'Institut Mathematique (Beograd)}, 60:23--45,
  1999.

\bibitem{dey14computing}
T.~K. Dey, F.~Fan, and Y.~Wang.
\newblock Computing topological persistence for simplicial maps.
\newblock In {\em Proceedings of the 30th Annual Symposium on Computational
  Geometry}, pages 345--354, 2014.

\bibitem{dionysus}
Dionysus.
\newblock By Dmitriy Morozov (\url{http://www.mrzv.org/software/dionysus/}).

\bibitem{edelsbrunner02topological}
H.~Edelsbrunner, D.~Letscher, and A.~Zomorodian.
\newblock Topological persistence and simplification.
\newblock {\em Discrete \& Computational Geometry}, 4(28):511--533, 2002.

\bibitem{ghrist08barcodes}
R.~Ghrist.
\newblock Barcodes: The persistent topology of data.
\newblock {\em Bull. Amer. Math. Soc.}, 45(1):61--75, 2008.

\bibitem{har-peled06fast}
S.~Har-Peled and M.~Mendel.
\newblock Fast construction of nets in low dimensional metrics, and their
  applications.
\newblock {\em SIAM Journal on Computing}, 35(5):1148--1184, 2006.

\bibitem{hatcher01}
A.~Hatcher.
\newblock {\em Algebraic Topology}.
\newblock Cambridge University Press, 2001.

\bibitem{kerber13approximate}
M.~Kerber and R.~Sharathkumar.
\newblock Approximate {\v c}ech complexes in low and high dimensions.
\newblock In {\em 24th International Symposium on Algorithms and Computation
  (ISAAC 2013)}, volume LNCS 8283, pages 666--676, 2013.

\bibitem{maria15zigzag}
C.~Maria and S.~Y. Oudot.
\newblock {Z}igzag {Persistence} via {R}eflections and {T}ranspositions.
\newblock In {\em Proc. ACM-SIAM Symposium on Discrete Algorithms (SODA)},
  January 2015.

\bibitem{milosavljevic11zigzag}
N.~Milosavljevic, D.~Morozov, and P.~Skraba.
\newblock Zigzag persistent homology in matrix multiplication time.
\newblock In {\em Proceedings of the 27th {ACM} Symposium on Computational
  Geometry}, 2011.

\bibitem{oudot14zigzag}
S.~Y. Oudot and D.~R. Sheehy.
\newblock Zigzag zoology: Rips zigzags for homology inference.
\newblock {\em Foundations of Computational Mathematics}, pages 1--36, 2014.

\bibitem{sheehy13linear}
D.~R. Sheehy.
\newblock Linear-size approximations to the {V}ietoris-{R}ips filtration.
\newblock {\em Discrete \& Computational Geometry}, 49(4):778--796, 2013.

\bibitem{tausz11applications}
A.~Tausz and G.~Carlsson.
\newblock Applications of zigzag persistence to topological data analysis.
\newblock arxiv:1108.3545, Aug 2011.

\bibitem{zomorodian05computing}
A.~Zomorodian and G.~Carlsson.
\newblock Computing persistent homology.
\newblock {\em Discrete \& Computational Geometry}, 33(2):249--274, 2005.

\end{thebibliography}
 }
  \ifthenelse{\equal{\appendixMode}{1}}{\clearpage\appendix\input{appendix}}{}
  
\end{document}